\newif\ifanonymous
\newif\ifreview
\newif\ifacm
\newif\ifgeolife
\newacronym{dow}{DoW}{Day of Week}
\newacronym{bce}{BCE}{Binary Cross Entropy}
\newacronym{ce}{CE}{Cross Entropy}
\newacronym{dp-stg}{DP-STG}{Differentially Private Synthetic Trajectory Generator}
\newacronym{edr}{EDR}{Edit Distance on Real Sequences}
\newacronym{exGAN}{exGAN}{Except-Condition GAN}
\newacronym{gp}{GP}{General Practitioner}
\newacronym{jsd}{JSD}{Jensen Shannon Divergence}
\newacronym{kde}{KDE}{Kernel Density Estimation}
\newabbreviation{TM}{TM}{Threat Model}
\newacronym{llm}{LLM}{Large Language Model}
\newabbreviation{wrt}{w.r.t.}{with respect to}
\newglossaryentry{mae}{
    name={MAE},
    short={MAE},
    long={Mean Absolute Error},
    first={Mean Absolute Error (MAE)},
    description={Mean Absolute Error}
}
\newglossaryentry{mre}{
    name={MRE},
    short={MRE},
    long={Mean Relative Error},
    first={Mean Relative Error (MRE)},
    description={Mean Relative Error}
}
\newacronym{mi}{MI}{Mutual Information}
\newacronym{mia}{MIA}{Membership Interference Attack}
\newacronym{ml}{ML}{Machine Learning}
\newacronym{mlp}{MLP}{Multi-Layer Perceptron}
\newacronym{mse}{MSE}{Mean Squared Error}
\newacronym{mwe}{MWE}{Minimal Working Example}
\newacronym{sgd}{SGD}{Stochastic Gradient Descent}
\newacronym{sota}{SOTA}{State Of The Art}
\newacronym{stc}{STC}{Spatial-Temporal-Categorical Distance}
\newacronym{stg}{STG}{Synthetic Trajectory Generator}
\newacronym{stn}{STN}{Large Spatial Transformer Network}
\newacronym{tcac}{TCAC}{Trajectory Category Auxiliary Classifier}
\newacronym{TRA}{TRA}{Trajectory Reconstruction Attack}
\newacronym{TSG}{TSG}{Two-Stage-\glsshort{gan}}
\newglossaryentry{aae}{
    name={AAE},
    short={AAE},
    long={Adversarial Autoencoder},
    first={Adversarial Autoencoder (AAE)},
    shortplural={AAEs},
    longplural={Adversarial Autoencoders},
    firstplural={Adversarial Autoencoders (AAEs)},
    description={Adversarial Autoencoder}
}
\newglossaryentry{adam}{
    name={Adam},
    short={Adam},
    description={Adam Optimizer~\cite{adam-optimizer}}
}
\newglossaryentry{adamw}{
    name={AdamW},
    short={AdamW},
    long={AdamW},
    first={AdamW},
    description={Pytorch's Adamw Optimizer with Decoupled Weight Decay}
}
\newglossaryentry{ae}{
    name={AE},
    short={AE},
    shortplural={AEs},
    long={Autoencoder},
    first={Autoencoder (AE)},
    firstplural={Autoencoders (AEs)},
    description={Autoencoder}
}
\newglossaryentry{ar}{
    name={AR-RNN},
    short={AR-RNN},
    long={Autoregressive RNN},
    first={Autoregressive \glsentryshort{rnn} (AR-RNN)},
    description={Autoregressive \glsentryshort{rnn}}
}
\newglossaryentry{cnn}{
    name={CNN},
    short={CNN},
    shortplural={CNNs},
    long={Convolutional Neural Network},
    first={Convolutional Neural Network (CNN)},
    firstplural={Convolutional Neural Networks (CNNs)},
    description={Convolutional Neural Network}
}
\newglossaryentry{conv1d}{
    name={Conv1D},
    short={Conv1D},
    shortplural={Conv1Ds},
    description={1D Convolution},
    first={Conv1D},
}
\newglossaryentry{dcgan}{
    name={DCGAN},
    short={DCGAN},
    shortplural={DCGANs},
    long={Deep Convolutional Generative Adversarial Network},
    first={Deep Convolutional Generative Adversarial Network (DCGAN)},
    firstplural={Deep Convolutional Generative Adversarial Networks (DCGANs)},
    description={Deep Convolutional Generative Adversarial Network}
}
\newglossaryentry{ddpm}{
    name={DDPM},
    short={DDPM},
    shortplural={DDPMs},
    long={Denoising Diffusion Probabilistic Model},
    first={Denoising Diffusion Probabilistic Model (DDPM)},
    firstplural={Denoising Diffusion Probabilistic Models (DDPMs)},
    description={Denoising Diffusion Probabilistic Model}
}
\newglossaryentry{dl}{
    name={DL},
    short={DL},
    long={Deep Learning},
    first={Deep Learning (DL)},
    description={Deep Learning}
}
\newglossaryentry{dp-sgd}{
    name={DP-SGD},
    short={DP-SGD},
    first={Differentially Private Stochastic Gradient Descent (DP-SGD)},
    long={Differentially Private Stochastic Gradient Descent},
    description={Differentially Private Stochastic Gradient Descent}
}
\newglossaryentry{dp}{
    name={DP},
    short={DP},
    long={Differential Privacy},
    first={Differential Privacy (DP)},
    description={Differential Privacy}
}
\newglossaryentry{dtw}{
    name={DTW},
    short={DTW},
    long={Dynamic Time Warping},
    first={Dynamic Time Warping (DTW)},
    description={Dynamic Time Warping}
}
\newglossaryentry{eid}{
    name={Explicit Identifier},
    first={Explicit Identifier},
    firstplural={Explicit Identifiers},
    description={Attributes that uniquely identify an individual},
}
\newglossaryentry{em}{
    name={EM},
    short={EM},
    shortplural={EMs},
    long={Exponential Mechanism},
    first={Exponential Mechanism (EM)},
    description={Exponential Mechanism}
}
\newglossaryentry{ema}{
    name={EMA},
    short={EMA},
    long={Exponential Moving Average},
    first={Exponential Moving Average (EMA)},
    description={Exponential Moving Average}
}
\newglossaryentry{emd}{
    name={EMD},
    short={EMD},
    long={Earth Mover's Distance},
    first={Earth Mover's Distance (EMD)},
    description={Earth Mover's Distance}
}
\newglossaryentry{fc}{
    name={FC},
    short={FC},
    shortplural={FCs},
    description={Fully Connected layer. Also called \textit{Dense} (TensorFlow) or \textit{Linear} (PyTorch) layer},
    first={Fully Connected (FC)},
    long={Fully Connected}
}
\newglossaryentry{fs}{
    name={FS-NYC},
    short={FS-NYC},
    long={Foursquare NYC},
    first={Foursquare NYC (FS-NYC)},
    description={Foursquare NYC~\cite{fs_nyc}}
}
\newglossaryentry{gan}{
    name={GAN},
    short={GAN},
    shortplural={GANs},
    long={Generative Adversarial Network},
    first={Generative Adversarial Network (GAN)},
    firstplural={Generative Adversarial Networks (GANs)},
    description={Generative Adversarial Network}
}
\newglossaryentry{geo-ind}{
    name={Geo-Ind},
    short={Geo-Ind},
    long={Geo-Indistinguishability},
    first={Geo-Indistinguishability (Geo-Ind)},
    description={Geo-Indistinguishability~\cite{Andres2013}}
}
\newglossaryentry{gpg}{
    name={GPG},
    short={GPG},
    first={GeoPointGAN (GPG)},
    description={GeoPointGAN~\cite{GeoPointGAN}}
}
\newglossaryentry{gru}{
    name={GRU},
    short={GRU},
    shortplural={GRUs},
    long={Gated Recurrent Unit},
    first={Gated Recurrent Unit (GRU)},
    firstplural={Gated Recurrent Units (GRUs)},
    description={Gated Recurrent Unit}
}
\newglossaryentry{gtg}{
    name={GTG},
    short={GTG},
    long={GeoTrajGAN},
    first={GeoTrajGAN (GTG)},
    description={GeoTrajGAN}
}
\newglossaryentry{hd}{
    name={HD},
    short={HD},
    long={Hausdorff Distance},
    first={Hausdorff Distance (HD)},
    firstplural={Hausdorff Distances (HDs)},
    description={Hausdorff Distance}
}
\newglossaryentry{iid}{
    name={i.i.d.},
    description={independent and identically distributed random variables}
}
\newglossaryentry{ladp}{
    name={LaDP},
    short={LaDP},
    long={Label Differential Privacy},
    first={Label Differential Privacy (LaDP)},
    description={Label Differential Privacy}
}
\newglossaryentry{ldp}{
    name={LDP},
    short={LDP},
    long={Local Differential Privacy},
    first={Local Differential Privacy (LDP)},
    description={Local Differential Privacy}
}
\newglossaryentry{lldp}{
    name={LLDP},
    short={LLDP},
    long={Local Label Differential Privacy},
    first={Local Label Differential Privacy (LLDP)},
    description={Local Label Differential Privacy}
}
\newglossaryentry{lstm}{
    name={LSTM},
    short={LSTM},
    shortplural={LSTMs},
    long={Long Short-Term Memory},
    first={Long Short-Term Memory (LSTM)},
    firstplural={Long Short-Term Memories (LSTMs)},
    description={Long Short-Term Memory}
}
\newglossaryentry{ltg}{
    name={LSTM-TrajGAN},
    short={LSTM-TrajGAN}
    first={LSTM-TrajGAN},
    description={LSTM-TrajGAN~\cite{Rao2020}}
}
\newglossaryentry{mnist-seq}{
    name={MNIST-Seq},
    first={MNIST Sequential (MNIST-Seq)},
    description={MNIST Sequential Dataset: Images are transformed to sequences of length $28$ with $28$ features each~\cite{Esteban2017}}
}
\newglossaryentry{nlp}{
    name={NLP},
    short={NLP},
    long={Natural Language Processing},
    first={Natural Language Processing (NLP)},
    description={Natural Language Processing}
}
\newglossaryentry{non-sensitive}{
    name={Non-Sensitive Attribute},
    long={non-sensitive attribute},
    longplural={non-sensitive attributes},
    first={non-sensitive attribute},
    firstplural={non-sensitive attributes},
    description={Attributes that do not contain sensitive information and are generally not used for individual identification},
}
\newglossaryentry{noise-only}{
    name={Noise-only},
    short={noise-only},
    long={noise-only},
    first={noise-only},
    description={A noise-only generative model is a model that only receives random noise, usually $z \sim \mathcal{N}(0,1)$, as input and generates samples from the data distribution}
}
\newglossaryentry{ntg}{
    name={NTG},
    short={NTG},
    long={Noise-TrajGAN},
    first={Noise-TrajGAN (NTG)},
    description={Noise-TrajGAN}
}
\newglossaryentry{nwa}{
    name={NWA},
    short={NWA},
    long={Newer Walk Alone},
    first={Newer Walk Alone (NWA)},
    description={Newer Walk Alone~\cite{Abul2008}}
}
\newglossaryentry{OP-Distance}{
    name=OP-Distance,
    description={Distance between  \emph{original} and \emph{protected} trajectory}
}
\newglossaryentry{OR-Distance}{
    name=OR-Distance,
	description={Distance between \emph{original} and \emph{reconstructed} trajectory}
}
\newglossaryentry{pdf}{
    name={PDF},
    short={PDF},
    long={Probability Density Function},
    first={Probability Density Function (PDF)},
    firstplural={Probability Density Functions (PDFs)},
    description={Probability Density Function}
}
\newglossaryentry{poc}{
    name={PoC},
    short={PoC},
    long={Proof-of-Concept},
    first={Proof-of-Concept (PoC)},
    firstplural={Proof-of-Concepts (PoCs)},
    description={Proof-of-Concept}
}
\newglossaryentry{poi}{
    name={POI},
    short={POI},
    shortplural={POIs},
    long={Point of Interest},
    first={Point of Interest (POI)},
    firstplural={Points of Interest (POIs)},
    description={Point of Interest}
}
\newglossaryentry{ppdp}{
    name={PPDP},
    short={PPDP},
    long={Privacy-Preserving Data Publication},
    first={Privacy-Preserving Data Publication (PPDP)},
    firstplural={Privacy-Preserving Data Publications (PPDPs)},
    description={Privacy-Preserving Data Publication}
}
\newglossaryentry{pptp}{
    name={PPTP},
    short={PPTP},
    long={Privacy-Preserving Trajectory Publication},
    first={Privacy-Preserving Trajectory Publication (PPTP)},
    firstplural={Privacy-Preserving Trajectory Publications (PPTPs)},
    description={Privacy-Preserving Trajectory Publication}
}
\newglossaryentry{qid}{
    name={Quasi-Identifier},
    long={quasi-identifier},
    longplural={quasi-identifiers},
    short={QID},
    shortplural={QIDs},
    description={Attributes that do not uniquely identify individuals alone but in combination with other quasi-identifiers},
    firstplural={Quasi-identifiers (QIDs)},
    first={Quasi-identifier (QID)}
}
\newglossaryentry{RAoPT}{
    name={RAoPT},
    short={RAoPT},
    long={Reconstruction Attack on Protected Trajectories},
    first={Reconstruction Attack on Protected Trajectories (RAoPT)},
    description={Reconstruction Attack on Protected Trajectories}
}
\newglossaryentry{rappor}{
    name={RAPPOR},
    short={RAPPOR},
    first={RAPPOR},
    description={Randomized Aggregatable Privacy-Preserving Ordinal Response~\cite{Erlingsson2014}}
}
\newglossaryentry{rdp}{
    name={RDP},
    short={RDP},
    long={R\'enyi Differential Privacy},
    first={R\'enyi Differential Privacy (RDP)},
    description={R\'enyi Differential Privacy}
}
\newglossaryentry{re}{
    name={RE},
    short={RE},
    long={Relative Error},
    first={Relative Error (RE)},
    description={Relative Error}
}
\newglossaryentry{relu}{
    name={ReLU},
    short={ReLU},
    shortplural={ReLUs},
    long={Rectified Linear Unit},
    longplural={Rectified Linear Units},
    first={Rectified Linear Unit (ReLU)},
    firstplural={Rectified Linear Units (ReLUs)},
    description={Rectified Linear Unit. Activation function used in neural networks},
}
\newglossaryentry{rgan}{
    name={RGAN},
    short={RGAN},
    long={Recurrent GAN},
    first={Recurrent \glsentryshort{gan} (RGAN)},
    description={Recurrent \glsentryshort{gan}}
}
\newglossaryentry{rnn}{
    name={RNN},
    short={RNN},
    shortplural={RNNs},
    long={Recurrent Neural Network},
    first={Recurrent Neural Network (RNN)},
    firstplural={Recurrent Neural Networks (RNNs)},
    description={Recurrent Neural Network}
}
\newglossaryentry{rtct}{
    name={RTCT},
    short={RTCT},
    long={Reversible Trajectory-to-CNN Transformation},
    first={Reversible Trajectory-to-CNN Transformation (RTCT)},
    description={Reversible Trajectory-to-CNN Transformation}
}
\newglossaryentry{sdd}{
    name={SDD},
    short={SDD},
    long={Sampling Distance and Direction},
    first={Sampling Distance and Direction (SDD)},
    description={Sampling Distance and Direction}
}
\newglossaryentry{sensitive}{
    name={Sensitive Attribute},
    long={sensitive attribute},
    short={SA},
    shortplural={SAs},
    longplural={sensitive attributes},
    first={Sensitive Attribute (SA)},
    firstplural={Sensitive Attributes (SAs)},
    description={Attributes containing sensitive information that should be protected, such as medical records or financial details},
}
\newglossaryentry{start}{
    name={START-RNN},
    short={START-RNN},
    long={Start-Point RNN},
    first={Start-Point \glsentryshort{rnn} (START-RNN)},
    description={Start-Point \glsentryshort{rnn}}
}
\newglossaryentry{swd}{
    name={SWD},
    short={SWD},
    long={Sliced Wasserstein Distance},
    first={Sliced Wasserstein Distance (SWD)},
    firstplural={Sliced Wasserstein Distances (SWDs)},
    description={Sliced Wasserstein Distance}
}
\newglossaryentry{tanh}{
    name=tanh,
    short=tanh,
    description=Hyperbolic Tangent
}
\newglossaryentry{trr}{
    name={TRR},
    short={TRR},
    long={Time Reversal Ratio},
    first={Time Reversal Ratio (TRR)},
    firstplural={Time Reversal Ratios (TRRs)},
    description={Time Reversal Ratio}
}
\newglossaryentry{ttd}{
    name={TTD},
    short={TTD},
    long={Total Travelled Distance},
    first={Total Travelled Distance (TTD)},
    firstplural={Total Travelled Distances (TTDs)},
    description={Total Travelled Distance}
}
\newglossaryentry{ttur}{
    name={TTUR},
    short={TTUR},
    long={Two Time Update Rule},
    first={Two Time Update Rule (TTUR)},
    firstplural={Two Time Update Rules (TTURs)},
    description={Two Time Update Rule}
}
\newglossaryentry{tul}{
    name={TUL},
    short={TUL},
    long={Trajectory User Linking},
    first={Trajectory User Linking (TUL)},
    description={Trajectory User Linking}
}
\newglossaryentry{uop}{
    name={UoP},
    short={UoP},
    long={Unit of Privacy},
    first={Unit of Privacy (UoP)},
    firstplural={Units of Privacy (UoPs)},
    description={Unit of Privacy}
}
\newglossaryentry{vae}{
    name={VAE},
    short={VAE},
    long={Variational Autoencoder},
    first={Variational Autoencoder (VAE)},
    firstplural={Variational Autoencoders (VAEs)},
    description={Variational Autoencoder}
}
\newglossaryentry{w4m}{
    name={W4M},
    short={W4M},
    long={Wait for Me},
    first={Wait for Me (W4M)},
    description={Wait for Me~\cite{Abul2010}}
}
\newglossaryentry{wd}{
    name={WD},
    short={WD},
    long={Wasserstein Distance},
    first={Wasserstein Distance (WD)},
    firstplural={Wasserstein Distances (WDs)},
    description={Wasserstein Distance}
}
\newglossaryentry{wgan-gp}{
    name={WGAN-GP},
    short={WGAN-GP},
    first={WGAN with Gradient Penalty (WGAN-GP)},
    firstplural={WGANs with Gradient Penalty (WGAN-GP)},
    description={WGAN with Gradient Penalty~\cite{iWGAN}},
    long={WGAN with Gradient Penalty}
}
\newglossaryentry{wgan-lp}{
    name={WGAN-LP},
    short={WGAN-LP},
    first={WGAN-LP},
    description={WGAN with Lipschitz Penalty~\cite{wgan_lp}},
    long={WGAN with Lipschitz Penalty}
}
\newglossaryentry{wgan}{
    name={WGAN},
    short={WGAN},
    shortplural={WGANs},
    first={Wasserstein GAN (WGAN)},
    firstplural={Wasserstein GANs (WGANs)},
    long={Wasserstein GAN},
    longplural={Wasserstein GANs},
    description={Wasserstein \glsentryshort{gan}\cite{wgan}}
}
\newglossaryentry{vmf}{
    name={VMF},
    short={VMF},
    long={Von Mises-Fisher Distribution},
    first={Von Mises-Fisher Distribution (VMF)},
    firstplural={Von Mises-Fisher Distributions (VMFs)},
    description={Von Mises-Fisher Distribution}
}
\newglossaryentry{hpc}{
    name={HPC},
    short={HPC},
    long={High-Performance Computing},
    first={High-Performance Computing (HPC)},
    description={High-Performance Computing}
}
\newglossaryentry{sdc}{
    name={SDC},
    short={SDC},
    long={S\o rensen-Dice coefficient},
    first={S\o rensen-Dice coefficient (SDC)},
    description={S\o rensen-Dice coefficient}
}
\newcommand{\DTcond}{8-Stat\xspace}  %
\newcommand{\Sample}{Sample\xspace}  %
\newcommand{\condInfo}{cond.\ info.\xspace}  %
\newcommand{\wrt}{\gls{wrt}\xspace}
\newglossaryentry{n}{
    name={\ensuremath{n}},
    description={Number of Samples in the Dataset},
    type=symbols
}
\newglossaryentry{m}{
    name={\ensuremath{m}},
    description={Number of Samples to be Generated with $m < n$},
    type=symbols
}
\newglossaryentry{L}{
    name={\ensuremath{L}},
    description={Trajectory Length},
    type=symbols
}
\newglossaryentry{p}{
    name={\ensuremath{p}},
    description={Type of Norm. $p=1$ for $l_1$-Norm, $p=2$ for $l_2$-Norm},
    type=symbols
}
\newglossaryentry{C}{
    name={\ensuremath{C}},
    description={Norm Bound},
    type=symbols
}
\newglossaryentry{M_c}{
    name={\ensuremath{M_c}},
    description={Private Compression Function},
    type=symbols
}
\newglossaryentry{f_c}{
    name={\ensuremath{f_c}},
    description={Sample Compression Function},
    type=symbols
}
\newglossaryentry{F_c}{
    name={\ensuremath{F_c}},
    description={Dataset Compression Function},
    type=symbols
}
\newglossaryentry{epsilon}{
    name={\ensuremath{\epsilon}},
    description={Privacy Parameter},
    type=symbols
}
\newglossaryentry{lap}{
    name={\ensuremath{\mathcal{L}(\lambda)}},
    text={\ensuremath{\mathcal{L}}},
    description={Laplace Distribution with scale $\lambda$},
    type=symbols
}
\newglossaryentry{gau}{
    name={\ensuremath{\mathcal{N}(\mu, \sigma^2)}},
    text={\ensuremath{\mathcal{N}}},
    description={Normal (Gaussian) Distribution with mean $\mu$ and variance $\sigma^2$},
    type=symbols
}
\newglossaryentry{uni}{
    name={\ensuremath{\mathcal{U}(a, b)}},
    text={\ensuremath{\mathcal{U}}},
    description={Uniform Distribution with bounds $a$ and $b$},
    type=symbols
}
\newglossaryentry{V}{
    name={\ensuremath{\mathcal{V}(\kappa, \mu)}},
    text={\ensuremath{\mathcal{V}}},
    description={Von Mises-Fisher Distribution with mean $\mu$ and concentration $\kappa$},
    type=symbols
}
\newglossaryentry{l1-norm}{
    name={\ensuremath{\| \cdot \|_1}},
    description={$l_1$-Norm (Manhattan Norm)},
    type=symbols
}
\newglossaryentry{l2-norm}{
    name={\ensuremath{\| \cdot \|_2}},
    description={$l_2$-Norm (Euclidean Norm)},
    type=symbols
}
\newglossaryentry{lp-norm}{
    name={\ensuremath{\| \cdot \|_p}},
    description={$l_p$-Norm},
    type=symbols
}
\newglossaryentry{Delta_1}{
    name={\ensuremath{\Delta_1}},
    description={$l_1$-Sensitivity},
    type=symbols
}
\newglossaryentry{Delta_2}{
    name={\ensuremath{\Delta_2}},
    description={$l_2$-Sensitivity},
    type=symbols
}
\newglossaryentry{Delta_p}{
    name={\ensuremath{\Delta_p}},
    description={$l_p$-Sensitivity},
    type=symbols
}
\newglossaryentry{adj}{
    name={\ensuremath{Adj}},
    description={Adjacency},
    type=symbols
}
\newglossaryentry{prob}{
    name={\ensuremath{\mathds{P}}},
    description={Probability},
    type=symbols
}
\newglossaryentry{reels}{
    name={\ensuremath{\mathbb{R}}},
    description={Real Numbers},
    type=symbols
}
\newglossaryentry{x}{
    name={\ensuremath{x}},
    description={Input Data},
    type=symbols
}
\newglossaryentry{y}{
    name={\ensuremath{y}},
    description={Ground Truth},
    type=symbols
}
\newglossaryentry{x_i}{
    name={\ensuremath{x_i}},
    description={Input Data Sample},
    type=symbols
}
\newglossaryentry{y_i}{
    name={\ensuremath{y_i}},
    description={Ground Truth for Input Sample $x_i$},
    type=symbols
}
\newglossaryentry{yhat_i}{
    name={\ensuremath{\hat{y}_i}},
    description={Predicted Output for Input Sample $x_i$},
    type=symbols
}
\newglossaryentry{parameters}{
    name={\ensuremath{\theta}},
    description={Trainable Parameters of the Model},
    type=symbols
}
\newglossaryentry{mechanism}{
    name={\ensuremath{\mathcal{K}}},
    description={Privacy Mechanism},
    type=symbols
}
\newglossaryentry{input_space}{
    name={\ensuremath{\mathcal{X}}},
    description={Input Space},
    type=symbols
}
\newglossaryentry{output_space}{
    name={\ensuremath{\mathcal{Y}}},
    description={Output Space},
    type=symbols
}
\newglossaryentry{loss}{
    name={\ensuremath{\mathcal{L}}},
    description={Loss Function $\mathcal{L}(y_i, \hat{y}_i)$},
    type=symbols
}
\newglossaryentry{gradient}{
    name={\ensuremath{\nabla_\theta}},
    description={Gradient with Respect to Trainable Parameters $\theta$},
    type=symbols
}
\newglossaryentry{yhat}{
    name={\ensuremath{\hat{y}}},
    description={Predicted Output for Input $x$},
    type=symbols
}
\newglossaryentry{d_in}{
    name={\ensuremath{d_{in}}},
    description={Input Dimension},
    type=symbols
}
\newglossaryentry{d_out}{
    name={\ensuremath{d_{out}}},
    description={Output Dimension},
    type=symbols
}
\newglossaryentry{D}{
    name={\ensuremath{D}},
    description={Dataset},
    type=symbols
}
\newglossaryentry{D_real}{
    name={\ensuremath{D_{\text{real}}}},
    description={Real Dataset},
    type=symbols
}
\newglossaryentry{D_syn}{
    name={\ensuremath{D_{\text{syn}}}},
    description={Synthetic Dataset},
    type=symbols
}
\newglossaryentry{D_test}{
    name={\ensuremath{D_{\text{test}}}},
    description={Test Dataset. Dataset during Generation.},
    type=symbols
}
\newglossaryentry{D_train}{
    name={\ensuremath{D_{\text{train}}}},
    description={Training Dataset},
    type=symbols
}
\newglossaryentry{eps_s}{
    name={\ensuremath{\varepsilon_s}},
    description={Privacy loss parameter used in \gls{dp-sgd}},
    type=symbols
}
\newglossaryentry{delta_s}{
    name={\ensuremath{\delta_s}},
    description={Failure probability (imprecise but commonly used term~\cite{deltaInDp2020}) used in \gls{dp-sgd}},
    type=symbols
}
\newglossaryentry{eps_c}{
    name={\ensuremath{\varepsilon_c}},
    description={Privacy loss parameter used for conditional information},
    type=symbols
}
\newglossaryentry{delta_c}{
    name={\ensuremath{\delta_c}},
    description={Failure probability (imprecise but commonly used term~\cite{deltaInDp2020}) used for \condInfo},
    type=symbols
}
\newcommand{\mainRQ}{\textit{What is the (utility) cost of formal privacy guarantees for deep learning-based trajectory generation}}
\newcommand{\rqDPSGD}{\textit{How does \gls{dp-sgd} affect the generated trajectories' utility?}\xspace}
\newcommand{\rqDPCond}{\textit{How can we use conditional information with differential privacy guarantees?}\xspace}
\newcommand{\rqArchitecture}{\textit{Are certain model types better suited for formal privacy guarantees than others?}\xspace}
\newcommand{\tashsays}[1]{#1}
\newcommand{\update}[1]{#1}
\begin{document}
\title{What is the Cost of Differential Privacy for Deep Learning-Based Trajectory Generation?}

\ifanonymous
  \author{\IEEEauthorblockN{Anonymous Authors}}
\else
\author{
\IEEEauthorblockN{Erik Buchholz}
\IEEEauthorblockA{University of New South Wales\\
CSIRO's Data61, Cyber Security CRC\\
Sydney, NSW, Australia\\
\href{mailto:e.buchholz@unsw.edu.au}{e.buchholz@unsw.edu.au}}
\and
\IEEEauthorblockN{Natasha Fernandes}
\IEEEauthorblockA{Macquarie University\\
Sydney, NSW, Australia\\
\href{mailto:natasha.fernandes@mq.edu.au}{natasha.fernandes@mq.edu.au}}
\and
\IEEEauthorblockN{David D. Nguyen}
\IEEEauthorblockA{CSIRO's Data61\\
Sydney, NSW, Australia\\
\href{mailto:david.nguyen@data61.csiro.au}{david.nguyen@data61.csiro.au}}
\and
\IEEEauthorblockN{Alsharif Abuadbba}
\IEEEauthorblockA{CSIRO's Data61, Cyber Security CRC\\
Sydney, NSW, Australia\\
\href{mailto:sharif.abuadbba@data61.csiro.au}{sharif.abuadbba@data61.csiro.au}}
\and
\IEEEauthorblockN{Surya Nepal}
\IEEEauthorblockA{CSIRO's Data61, Cyber Security CRC\\
Sydney, NSW, Australia\\
\href{mailto:surya.nepal@data61.csiro.au}{surya.nepal@data61.csiro.au}}
\and
\IEEEauthorblockN{Salil S. Kanhere}
\IEEEauthorblockA{University of New South Wales\\
Sydney, NSW, Australia\\
\href{mailto:salil.kanhere@unsw.edu.au}{salil.kanhere@unsw.edu.au}}
}
\fi

\maketitle

\begin{abstract}
    While location trajectories offer valuable insights, they also reveal sensitive personal information.
Differential Privacy (DP) offers formal protection, but achieving a favourable utility-privacy trade-off remains challenging.
Recent works explore deep learning-based generative models to produce synthetic trajectories.
However, current models lack formal privacy guarantees and rely on conditional information derived from real data during generation.
This work investigates the utility cost of enforcing DP in such models, addressing three research questions across two datasets and eleven utility metrics.
(1) We evaluate how DP-SGD, the standard DP training method for deep learning, affects the utility of state-of-the-art generative models.
(2) Since DP-SGD is limited to unconditional models, we propose a novel DP mechanism for conditional generation that provides formal guarantees and assess its impact on utility. 
(3) We analyse how model types -- Diffusion, VAE, and GAN -- affect the utility-privacy trade-off.
Our results show that DP-SGD significantly impacts performance, although some utility remains if the datasets is sufficiently large.
The proposed DP mechanism improves training stability, particularly when combined with DP-SGD, for unstable models such as GANs and on smaller datasets.
Diffusion models yield the best utility without guarantees, but with DP-SGD, GANs perform best, indicating that the best non-private model is not necessarily optimal when targeting formal guarantees.
In conclusion, DP trajectory generation remains a challenging task, and formal guarantees are currently only feasible with large datasets and in constrained use cases.

\end{abstract}

\IEEEpeerreviewmaketitle

\section{Introduction}\label{sec_intro}

Various services, such as navigation (Google Maps), transport (Uber), and social gaming (Pok\'emon Go), collect vast amounts of location data.
While the collected location trajectories are valuable for various applications, from urban planning to disease spread analysis, they also contain sensitive information about individuals~\cite{Primault2019}.
For example, four locations suffice to identify \SI{95}{\%} of users from phone connection data~\cite{DeMontjoye2013}, and taxi drivers' break times can reveal whether they are practising Muslims~\cite{Franceschi-Bicchierai}.
Accordingly, a large body of research is dedicated to the privacy of trajectory data~\cite{Primault2019,Jiang2021,Miranda-Pascual2024,pets24_paper}.
Due to their formal guarantees, significant research has focused on mechanisms with \gls{dp}~\cite{Miranda-Pascual2023}.
Yet, the challenging utility-privacy trade-off, vulnerability to reconstruction attacks~\cite{RAoPT,Shao2020}, and other limitations hinder their real-world applicability~\cite{Miranda-Pascual2023,pets24_paper}.

\begin{figure}[t]
  \centering
  \includegraphics[width=\columnwidth]{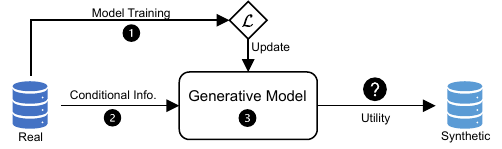}
  \vspace{-1.5em}
  \caption{
    Research questions:
    How do
    (1) \glsshort{dp-sgd},
    (2) \glsshort{dp} conditional information, and 
    (3) model architectures affect the utility and privacy trade-off of trajectory generation?
    }
    \ifacm
      \Description{
        Overview of the Cost of Trajectory Privacy Study.
        The study evaluates the utility cost of \gls{dp-sgd} on \gls{sota} trajectory generation models, proposes a method to use conditional information without violating differential privacy, and compares the utility and privacy trade-off of three different model families.
      }
    \fi
  \label{fig_motivation}
  \vspace{-1.5em}
\end{figure}

To address the limitations of releasing a protected dataset, \citet{Liu2018a} suggested generating synthetic data that can be released instead of the private dataset.
Subsequently, various deep learning-based generative models for trajectory data have been developed.
Many of these approaches yield utility that substantially surpasses the traditional publication approaches.
Yet, \citet{pets24_paper} find that no existing deep learning model provides formal privacy guarantees.

As illustrated in \figref{fig_motivation}, private information can flow over two paths from the real dataset into the synthetic data.
First, during training, the model updates its parameters to learn the real data distribution.
While the specifics depend on the model and loss function, the parameters must encode some real data information to generate meaningful results.
Second, some models use input derived from the real dataset, referred to as \textit{conditional information}, which guides generation~\cite{DiffTraj23, Rao2020}.
This input can represent derived attributes~\cite{DiffTraj23}, such as average speed or start location, or entire trajectories~\cite{Rao2020}.
Although few unconditional models exist~\cite{DiffTraj23}, most \gls{sota} generative trajectory models rely on such conditional information to yield meaningful utility~\cite{pets24_paper}.

Due to the potential of deep learning-based generative trajectory models, this study addresses the overarching research question:
  \mainRQ?
Since \gls{dp} serves as the current de facto standard for privacy, our analysis focuses on this technique to ensure formal guarantees.
While non-\gls{dl} approaches for \gls{dp} trajectory generation exist, these have already been thoroughly reviewed by multiple existing studies~\cite{pets24_paper,Miranda-Pascual2023,Kapp2023_gen,Jin2021}.
Therefore, this study explicitly focuses on deep learning-based trajectory generation which has not been comprehensively evaluated regarding formal privacy guarantees before.
Following \figref{fig_motivation}, the study seeks to answer the research question by examining the impact of three factors:

\ballnumber{1}
First, we assess the impact of formal guarantees \wrt the training data on utility.
As \gls{dp-sgd} is the most widely used method for providing \gls{dp} guarantees \wrt a \gls{dl} model's training data~\cite{dpfyML}, we pose the following question:
\begin{research_question}
  \item \rqDPSGD \label{rq_dpsgd}
\end{research_question}
We find that using \gls{dp-sgd} with a realistic privacy parameter ($\varepsilon = 10$) leads to a significant utility loss, which is restrictive for most practical applications, especially on smaller datasets.
However, comparison with the \gls{sota} baseline PrivTrace~\cite{PrivTrace2023} shows that this loss is in line with non-\gls{dl} approaches.

\ballnumber{2}
For unconditional models without additional input, \gls{dp-sgd} ensures formal privacy guarantees \wrt the generated data.
However, most \gls{sota} models perform significantly better with conditional information~\cite{DiffTraj23,DiffRnTraj24,Rao2020}.
Directly using \condInfo during generation poses a critical privacy risk, as even statistical data can leak private information and violate \gls{dp} guarantees.
For instance, LSTM-TrajGAN has been shown to converge towards an identity function when overtrained~\cite{pets24_paper}.
However, to the best of our knowledge, no trajectory-level \gls{dp} approach for conditional information has been proposed, leading to the second research question:
\begin{research_question}
  \item \rqDPCond \label{rq_dpcond}
\end{research_question}
To tackle this, we introduce a novel \gls{dp} embedding for conditional information, combining compression and clipping with a \gls{dp} mechanism and prove its guarantees formally.
This \gls{dp} \condInfo can enhance utility and stabilise training while providing guarantees in some settings, especially for \glspl{gan}, combined with \gls{dp-sgd}, and on smaller datasets.

\ballnumber{3}
While diffusion models consistently outperform other approaches for trajectory generation without privacy guarantees~\cite{DiffTraj23, DiffRnTraj24, Map2Traj2024}, other architectures, such as \glspl{gan} and \glspl{vae}, can be adapted with minimal modifications.
The reduced training time of \glsshortpl{vae} or the distinct structure of \glsshortpl{gan} may influence the utility-privacy trade-off under formal guarantees, motivating the third research question:
\begin{research_question}
  \item \rqArchitecture \label{rq_architecture}
\end{research_question}
Specifically, we aim to determine whether the best-performing non-\gls{dp} model also remains the best-performing \gls{dp} model or if trends shift when applying formal guarantees.
To address this, we compare a diffusion model, a \glsshort{vae}, and a \glsshort{gan}, all derived from a similar UNet architecture.
Aligning with the findings of related work~\cite{DiffTraj23}, the conditional diffusion model yields the best performance without formal guarantees, although the \gls{vae} achieves competitive results especially on the smaller dataset.
However, when considering full formal guarantees, the \gls{vae} cannot compare with the diffusion model while the \gls{gan} performs even superior.

\subheading{Contributions}To provide a comprehensive evaluation, we follow the framework of \citet{pets24_paper} and evaluate on two public datasets, namely Porto~\cite{porto_taxi} and GeoLife~\cite{GeoLife1}, across eleven metrics, covering all four types of trajectory utility~\cite{pets24_paper}.
This study contributes to trajectory privacy by:
\begin{enumerate}[label=\textbf{\arabic*.}]
  \item Quantifying the \textit{utility cost of \glsshort{dp-sgd}} on \glsshort{sota} trajectory generation models.
  \item Proposing a method to utilise \textit{conditional information without violating differential privacy}, proving its privacy guarantees, and evaluating its impact on utility.
  \item Comparing the utility-privacy trade-off of \textit{three model types}: Diffusion, \glsshort{vae}, and \glsshort{gan}.
  \item Conducting a comprehensive evaluation using \textit{eleven metrics} and \textit{two datasets} including one \gls{sota} non-\gls{dl} baseline~\cite{PrivTrace2023}.
  \item Publishing all code and data to support further research\footnote{\label{ref_code}
    \ifanonymous
      \href{https://anonymous.4open.science/r/ptg_anon/}{Anonymous GitHub} or
      \href{https://figshare.com/s/df086bd8f78d26a70b41}{figshare.com}
    \else
      \url{https://github.com/erik-buchholz/CostOfTrajectoryPrivacy}
    \fi
  }.
\end{enumerate}

\section{Background}\label{sec_background}

This section outlines relevant background knowledge.
It defines the trajectory dataset (\secref{bg_trajectory}), introduces differential privacy (\secref{bg_dp}), and reviews generative models (\secref{bg_generative_models}) and \gls{dp-sgd} (\secref{bg_dpsgd}).

\subsection{Trajectory}\label{bg_trajectory}

A trajectory dataset is a set of \gls{n} trajectories $D = \left\{ T_1, T_2, \ldots, T_n \right\}$.
Each trajectory $T_i$ represents an ordered sequence of points:
\[
  T_i = \left( p^{(i)}_1, p^{(i)}_2, \ldots, p^{(i)}_L \right)
\]
where $L$ is the length of the trajectory.
In this work, we assume that each point $p^{(i)}_j$ consists of at least two location coordinates, typically geographical coordinates such as latitude and longitude:
$p^{(i)}_j = \left( lat, lon \right)$.
Trajectories may include temporal or semantic context~\cite{Tu2019}, but this work focuses solely on location information.

\subsection{Differential Privacy}\label{bg_dp}

Privacy notions are commonly classified as \textit{syntactic} or \textit{semantic}~\cite{Majeed2023}.
\glsreset{dp}\textit{\gls{dp}}~\cite{Dwork2013} represents the primary semantic notion for protecting personal data~\cite{Guerra-Balboa2022,Majeed2023}.
Differential Privacy ensures that the inclusion or exclusion of a single user's data does not significantly affect the output, offering \emph{plausible deniability} regarding participation in the dataset.
Formally, a mechanism satisfies $(\varepsilon, \delta)$-DP if~\cite{Dwork2013}:

\begin{definition}[Differential Privacy~\cite{Dwork2013}]
	A mechanism $\gls{mechanism}$ provides $(\varepsilon, \delta)$-differential privacy if for all \textit{adjacent} datasets $D_1$ and $D_2$ ($\gls{adj}(D_1, D_2)$), and all $S \subseteq Range(\gls{mechanism})$:
	\begin{equation}\label{eq:dp}
		\gls{prob}\left[\gls{mechanism}(D_1)\in S\right] \leq e^{\varepsilon}\times \gls{prob}\left[\gls{mechanism}(D_2)\in S\right] + \delta
	\end{equation}
\end{definition}

The main privacy parameters in differential privacy are $\varepsilon$ and $\delta$.
Typical values for $\varepsilon$ range from \num{0.01} to \num{10}, with $\varepsilon \leq 1$ considered strong and $\varepsilon \leq 10$ realistic~\cite{Erlingsson2014}.
In \gls{dl}, larger $\varepsilon$ values are often used in practice, as they may still offer empirical protection~\cite{dpfyML}.
Best practice is to set $\delta \ll \nicefrac{1}{n}$, e.g., $\delta = \nicefrac{1}{n^{1.1}}$, where $n$ is the dataset size~\cite{dpfyML}.

Differential Privacy requires defining when two datasets are \textit{neighbouring} (\gls{adj}).
The two most common adjacency relations are \textit{add-or-remove} and \textit{replace-one} (or \textit{substitution})~\cite{dpfyML}.
Add-or-remove treats datasets as adjacent if one can be obtained from the other by adding or removing a single record.
Replace-one assumes one record is replaced, keeping the dataset size fixed.
The replace-one relation is approximately twice as strong, as it combines both adding and removing a record~\cite{dpfyML} (e.g., $\varepsilon = 2$ under replace-one is roughly as strong as $\varepsilon = 1$ under add-or-remove).
While add-or-remove is more common, replace-one enables the analysis of mechanisms requiring a fixed number of records.
Moreover, the \gls{uop}~\cite{pets24_paper,dpfyML} defines what constitutes a single record.
While standard \gls{dp} aims for \textit{user-level} privacy, we adopt the \textit{trajectory-level} notion, treating each trajectory as a record~\cite{pets24_paper}.
This is common in \gls{dl}, where guarantees are typically applied per training sample, reducing utility loss at the cost of weaker privacy guarantees~\cite{dpfyML}.

Additionally, \gls{dp} provides \textit{composition theorems} that enable combining multiple mechanisms while preserving formal guarantees.
For example, it is invariant to post-processing and supports both sequential and parallel composition~\cite{Dwork2013}.

\gls{dp} can be achieved through various mechanisms:

\subheading{Laplace Mechanism}\label{bg_laplace}
The \textit{Laplace mechanism}~\cite{Dwork2013} is commonly used for continuous values.
It adds \gls{iid} noise drawn from $\gls{lap}(\Delta_1 f / \varepsilon)$ to each component of a function $f(X)$, where $\Delta_1 f$ is the $l_1$ sensitivity, i.e., the maximum change in $f$ between any two adjacent datasets. 
This yields a mechanism that satisfies $\varepsilon$-\gls{dp}.

\subheading{Gaussian Mechanism}\label{bg_gaussian}
The Gaussian mechanism adds noise based on the $l_2$-sensitivity $\Delta_2 f$, making it preferable for high-dimensional functions $f$.
It samples from a normal distribution $\mathcal{N}(0, \sigma^2)$, where $\sigma = \nicefrac{\Delta_2 f}{\varepsilon} \cdot \sqrt{2 \ln(1.25/\delta)}$ ensures $(\varepsilon, \delta)$-differential privacy for $\varepsilon, \delta \in (0,1)$~\cite{Dwork2013}.
\citet{Balle2018} propose the analytical Gaussian mechanism, which applies to any $\varepsilon > 0$ and provides tighter bounds.

\subheading{VMF Mechanism}\label{bg_vmf}\glsreset{vmf}
The \textit{\gls{vmf}} mechanism~\cite{Weggenmann2021} provides \gls{dp} guarantees by sampling from a \gls{vmf} distribution, defined over unit vectors $v \in \mathbb{R}^d$ with $\|v\|_2 = 1$.
Its \glsshort{pdf} is
\begin{equation}
  \gls{V}(\kappa, \mu)(x) = C_d(\kappa) e^{\kappa \mu^T x}
\end{equation}
where $\kappa > 0$ is the concentration parameter, $\mu$ is the mean direction, and $C_d(\kappa)$ is a normalisation constant.
The mechanism satisfies $(\varepsilon, 0)$-\gls{dp} when $\kappa = \nicefrac{\varepsilon}{\Delta_2 f}$.

\subsection{Generative Models}\label{bg_generative_models}

Generative models for sequential data~\cite{Eigenschink2023}, such as trajectories, address limited data availability and enable synthetic data generation for sensitive domains.

\textbf{\glsfirstpl{ae}}~\cite{autoencoders}\label{bg_ae} consist of an encoder that maps inputs to a lower-dimensional latent representation and a decoder that reconstructs the original input.
The latent bottleneck forces the outputs to differ, while the most important information is preserved.

\textbf{\glsfirstpl{vae}}~\cite{vae2013}\label{bg_vae} extend \glspl{ae} by encoding inputs into a latent distribution with mean $\mu$ and standard deviation $\sigma$. 
The decoder samples from this distribution to reconstruct the input.
The loss combines a reconstruction loss and the KL divergence, encouraging the latent space to converges towards $\mathcal{N}(0, 1)$.
For reconstruction, the decoder samples from the distribution defined by the encoder, while for generation, it samples from the standard normal distribution $\mathcal{N}(0, 1)$ generating new unseen samples.

\textbf{\glsfirstpl{gan}}~\cite{Goodfellow2014}\label{bg_gan} consist of a generator $G$ and a discriminator $D$.
The generator $G$ transforms input noise $z \sim p_z$ into samples resembling real data, while the discriminator $D$ evaluates whether a sample is real or generated. 
The generator and discriminator train iteratively until $G$ produces samples that are indistinguishable from real data (for the discriminator).

\textbf{Diffusion Models}~\cite{diffusion2020}\label{bg_diffusion} generate data by denoising Gaussian noise through a learned reverse diffusion process.
Starting from $\hat{x}_T \sim \mathcal{N}(0, I)$, the model iteratively predicts and removes noise to recover $\hat{x}_0$.
During training, noise is incrementally added to real data $x_0$ in the \textit{forward diffusion process}, effectively converting the sample into Gaussian noise. 
UNet~\cite{unet2015} architectures are commonly used.
Diffusion models produce high-quality outputs with stable training but are slower to sample compared to \glspl{gan} or \glspl{vae}.  %

\subsection{Differentially Private SGD}\label{bg_dpsgd}

\gls{dl} models are often trained on sensitive data containing personal or proprietary information.
\glspl{mia}~\cite{Shokri2017} have shown that such models can leak training data, even if the data itself is not published.
To mitigate this, \gls{dp} has been applied to \gls{dl}~\cite{Abadi2016,dpfyML}.
We refer to \citet{dpfyML} for a comprehensive overview and focus here on \textit{\gls{dp-sgd}}~\cite{Abadi2016}, the most widely used method to achieve \gls{dp} in \gls{dl}.
The core idea behind \gls{dp-sgd} is the addition of Gaussian noise to the gradients, which are used to update the weights of the model.
The required noise level depends on the sensitivity $\Delta f$ of the mechanism (\refer \secref{bg_dp}), which in \gls{dl} corresponds to the influence of a single training sample.

Since gradient norms are generally unbounded, \gls{dp-sgd} first clips per-sample gradients to a norm bound $C$.
Second, Gaussian noise is added to the gradients based on this clipping norm $C$ and a noise multiplier $\sigma$.
A privacy accountant tracks the accumulated privacy loss during training, yielding a final $(\varepsilon, \delta)$-\gls{dp} guarantee.
The noise multiplier can be estimated in advance to target a specific privacy budget.

\section{Threat Model}\label{sec_threat_model}

As shown in \figref{fig_motivation}, synthetic data generation can leak private information during \textit{training} of the generative model (\ballnumber{1}) or \textit{generation} of synthetic data (\ballnumber{2}).
Therefore, we distinguish four threat models based on the guarantees during training and generation.
We consider a private dataset $D$, partitioned into disjoint training (\gls{D_train}) and test (\gls{D_test}) sets, where \gls{D_test} provides conditional information for generation.

This work targets trajectory-level guarantees~\cite{pets24_paper}, where each sample corresponds to a single trajectory.
While user-level guarantees are stronger, sample-level guarantees are more common in deep learning~\cite{dpfyML} and offer a reasonable trade-off between utility and privacy (\refer \secref{bg_dp}).

\ballnumber{A}\subheading{No Guarantees}
At this level, no formal privacy guarantees are provided.
Privacy relies solely on the synthetic data's inherent properties and assumes that used conditional information does not leak sensitive details.
For instance, using statistical features as done by DiffTraj~\cite{DiffTraj23}, may be acceptable in low-risk scenarios.
Most \gls{sota} models~\cite{DiffTraj23, Rao2020, pets24_paper} operate in this setting without formal guarantees.

\ballnumber{B}\subheading{Training-level Guarantees}
Guarantees apply only to the training data \gls{D_train}, offering protection against attacks such as \gls{mia}~\cite{Shokri2017}.
This setting is relevant when conditional information is non-sensitive or the risk of leakage during generation is low.
For example, a conditional model could use synthetic or public conditional data.

\ballnumber{C}\subheading{Generation Guarantees}
This level ensures privacy only during generation \wrt the test set \gls{D_test}.
Unconditional models inherently satisfy this, as they use no input during generation, yielding $0$-\gls{dp} \wrt \gls{D_test}.
This setting is relevant when model or output leakage is unlikely, or if training uses public data and generation uses private data.

\ballnumber{D}\subheading{Full Guarantees}
This level combines \ballnumber{B} and \ballnumber{C}, offering formal guarantees for both the training data and any \condInfo used during generation.
While providing the strongest protection, it may lead to significant utility loss.

\section{Related Work}\label{sec_related}

The potential value of trajectory data for various applications has driven extensive research on privacy-preserving methods for sharing trajectories.
\citet{Miranda-Pascual2023} provide a detailed review of \gls{dp} trajectory publishing approaches.
Their work highlights the shortcomings of the existing approaches and motivates the need for further research.
\update{
  As one state-of-the-art representative for non-deep learning approaches, we include PrivTrace~\cite{PrivTrace2023} (\refer \secref{sec_setup}), for its strong utility under \gls{dp} and available source code.
  PrivTrace generates synthetic trajectories in three steps: \begin{enumerate*}
  \item discretising the dataset's space with a density-aware grid,
  \item training first- and second-order Markov models with \gls{dp}, and
  \item generating trajectories via a random walk algorithm.
  \end{enumerate*}
}

As an alternative to traditional methods, \citet{Liu2018a} propose deep learning-based generative models for trajectory generation.
\citet{pets24_paper} systematise \gls{sota} deep generative models for trajectory data and introduce an evaluation framework.
They show that these models offer strong utility but lack formal privacy guarantees.
They also identify privacy risks in some models and recommend \gls{dp-sgd} to mitigate them.
However, \gls{dp-sgd} requires models to avoid additional inputs during generation, unlike many \gls{sota} models such as LSTM-TrajGAN~\cite{Rao2020}.
Recently, diffusion models have emerged that do not rely on such inputs.

\textit{DiffTraj}~\cite{DiffTraj23} is the first application of the \gls{ddpm} diffusion model for trajectory generation and serves as the baseline in this work.
Its architecture is based on a UNet~\cite{unet2015} with up- and downsampling modules built from ResNet blocks using \gls{conv1d} layers.
An attention-based transition module~\cite{Vaswani2023} connects these components to capture sequential dependencies.
The model supports optional \textit{conditional information} for guided generation, such as average speed, distance, and start time.
Despite these strengths, DiffTraj lacks formal privacy guarantees, and while diffusion models are more robust to \gls{mia} than \glspl{gan}, they remain vulnerable~\cite{Matsumoto2023}.

\textit{Diff-RNTraj}~\cite{DiffRnTraj24} generates trajectories directly on the road network, avoiding errors from generating GPS trajectories followed by map-matching.
A spatial validity loss penalises unconnected segments, improving performance over baselines.
However, it requires road network data and lacks formal privacy guarantees.
We exclude Diff-RNTraj from our experiments due to these constraints.

To the best of our knowledge, ConvGAN~\cite{convgan} is the only approach applying \gls{dp-sgd} to trajectory generation.
While ConvGAN represents the first proof-of-concept for this methodology, the resulting model cannot generate trajectories with real-world utility, both with and without \gls{dp} guarantees~\cite{convgan}.
Therefore, we do not further consider ConvGAN.

\section{Methodology}\label{sec_design}

This section outlines the methodology to address the research questions defined in \secref{sec_intro}.
As depicted in \figref{fig_design_overview}, information flows from the real dataset \gls{D_real} to the synthetic dataset \gls{D_syn} along two paths.
First, during training model parameters are updated using \gls{D_train} (blue path).
To ensure privacy \wrt the training data, we apply \gls{dp-sgd}~\cite{Abadi2016} (\secref{sec_method_dpsgd}).
Second, during generation (dotted path), the model optionally incorporates conditional information derived from \gls{D_test}.
To protect this input, we propose a novel \gls{dp} conditional embedding mechanism (\secref{sec_conditional_info}).
The overall design aims to ensure formal guarantees along both paths and quantify the resulting utility-privacy trade-off.
To generalise findings and assess the impact of model choice, we evaluate three generative model types -- diffusion, \gls{gan}, and \gls{vae} -- detailed in \secref{sec_architecture}.
The privacy analysis of the proposed design is provided in \secref{sec_privacy_analysis}.

\begin{figure}
  \centering
  \includegraphics[width=\linewidth]{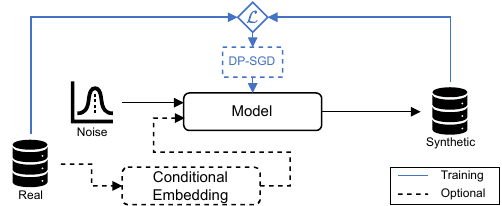}
  \vspace{-1.5em}
  \caption{
  The dataset is connected to the synthetic data through two paths that leak information.
  First, during training, the model parameters are updated.
  Here, \gls{dp-sgd} can ensure privacy.
  Second, conditional embedding is optionally derived from real samples and used during generation.
  }
  \label{fig_design_overview}
  \ifacm
  \Description{
    Information flows from the real to the synthetic dataset via two paths.
    First, during training, the model parameters are updated.
    This path is depicted in blue.
    Optionally, \gls{dp-sgd} can ensure privacy by adding noise to the gradients.
    Second, during generation, the model optionally incorporates conditional information derived from the real dataset.
    This path is depicted through dotted lines.
    Here, a conditional embedding creates a conditional embedding based on real samples.
    We propose a novel \gls{dp} mechanism to ensure privacy within this module.
  }
  \fi
  \vspace{-1.5em}
\end{figure}

\subsection{Privacy of Training}\label{sec_method_dpsgd}
During training, model parameters are updated using gradients derived from a loss.
This loss reflects real data either directly, e.g., via the reconstruction loss in a \gls{vae}, or indirectly, e.g., via the discriminator in a \gls{gan} (\refer \secref{bg_generative_models}).
Thus, model parameters may encode aspects of the real data necessary for generating realistic synthetic samples (\refer blue path in \figref{fig_design_overview}).
This risks memorising and reproducing training samples (via overfitting), and allows \glspl{mia}~\cite{Shokri2011} to infer details about \gls{D_train}.
The most widely used method to prevent such leakage is \gls{dp-sgd}\cite{Abadi2016,dpfyML}.

We use the Opacus library~\cite{Opacus} with \texttt{PRV} accounting~\cite{prv_accountant2021} to implement \gls{dp-sgd}.
The noise multiplier is pre-determined based on the desired target values for $(\gls{eps_s}, \gls{delta_s})$, with $\gls{eps_s} = 10.0$ as default.
This value the largest value still considered offering realistic guarantees in \gls{dl}~\cite{dpfyML}.
Smaller values incur a high utility cost, while larger ones are too weak for formal guarantees~\cite{dpfyML}, although they may still protect against specific attacks.
We follow the standard recommendation $\gls{delta_s} = \nicefrac{1}{n^{1.1}}$ with $n = |\gls{D_train}|$~\cite{dpfyML}.

\gls{dp-sgd} is applied to all gradients for the diffusion model and \gls{vae}.
For the \gls{gan}, only the generator is trained with \gls{dp}, while the discriminator is trained without it to preserve training stability.
This does not compromise privacy \wrt the synthetic data (or release of the generator), as the discriminator is not involved in generation.
However, the discriminator must not be released.

\subsection{Conditional Information}\label{sec_conditional_info}

Most \gls{sota} generative trajectory models, such as DiffTraj~\cite{DiffTraj23} and LSTM-TrajGAN~\cite{Rao2020}, rely on conditional information for high utility.
For instance, LSTM-TrajGAN uses encoded trajectories, while DiffTraj incorporates eight derived features.
This creates an information flow from the real dataset \gls{D_real} to the generated dataset \gls{D_syn} (\refer \ballnumber{2} in \figref{fig_motivation}).
As a result, conditional models lack formal privacy guarantees, even when trained with \gls{dp-sgd}.
The risk of sensitive information leaking through conditional inputs has been demonstrated for LSTM-TrajGAN~\cite{pets24_paper}. 
To mitigate this risk, we introduce a novel \gls{dp} conditional embedding mechanism that ensures privacy \wrt the \condInfo used during generation.
It is illustrated in \figref{fig_cond_info} and marked by the black dotted box in \figref{fig_design_overview}.
We describe its components in the order they are applied.

\begin{figure}
  \centering
  \includegraphics[width=\linewidth]{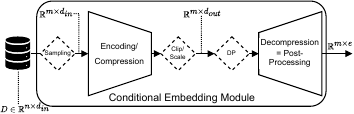}
  \vspace{-1.5em}
  \caption{
    First, $m$ samples from the dataset are selected uniformly at random.
    A neural network maps these samples to a lower-dimensional space $d_{out}$.
    Optionally, the per-sample norm is clipped/scaled, and a \gls{dp} mechanism is applied.
    Finally, the resulting encoding is decompressed.  %
  }
  \label{fig_cond_info}
  \ifacm
  \Description{
    During generation, $m$ samples from the dataset are selected uniformly at random.
    A neural network maps these samples to a lower-dimensional space $d_{out}$.
    Optionally, the per-sample norm is clipped/scaled, and a \gls{dp} mechanism is applied.
    Finally, the resulting encoding is decompressed to the embedding dimension $e$ used by the generative model.
  }
  \fi
  \vspace{-1em}
\end{figure}

\subheading{1) Input Format}
We consider two input formats of conditional information.
The first format, referred to as \textbf{\textit{\DTcond}}, corresponds to the original DiffTraj~\cite{DiffTraj23} format and includes:
\begin{enumerate*}
  \item departure time,
  \item total distance,
  \item total duration,
  \item total length,
  \item average distance,
  \item average speed,
  \item start grid ID,
  \item end grid ID.
\end{enumerate*}
This input is embedded using the original wide-and-deep network~\cite{DiffTraj23}.
If the \gls{dp} mechanism is not enabled (threat models \ballnumber{A} and \ballnumber{B}), the resulting embedding is used directly and the following steps 2--6 are skipped.

The second format, \textbf{\textit{\Sample}}, uses a full trajectory sample from the test set \gls{D_test} as conditional input.
Naturally, using a trajectory as conditional information suffers from the risk of the models returning the trajectory unaltered, as can happen with LSTM-TrajGAN~\cite{pets24_paper}.
Therefore, we never use this format without applying the \gls{dp} mechanism, i.e., all steps 1--6 are mandatory.

\subheading{2) Sampling}
In the first step, we uniformly sample $m \leq |D_{test}|$ trajectories \update{without replacement} from $D_{test}$.
This step is crucial for the privacy guarantees, and the value of $m$ must be fixed upfront independently of the dataset to prevent information leakage.

\subheading{3) Compression}
The amount of noise added to achieve \gls{dp} guarantees depends on the dimensionality of the data.
Therefore, we compress the $m$ samples to a lower-dimensional space $d_{\text{out}}$ using a single \gls{fc} layer.

\subheading{4) Clipping and Scaling}
To apply a \gls{dp} mechanism, the sensitivity of each sample must be bounded (\refer \secref{bg_dp}).
We achieve this by clipping or scaling the compressed embedding $d$ to a fixed norm $C$.
The norm type $p$ depends on the \gls{dp} mechanism used.
For the Laplace ($p=1$) and Gaussian ($p=2$) mechanisms, we apply standard norm clipping as in \gls{dp-sgd}~\cite{Abadi2016}:
\begin{equation}
  d_{\text{clipped}} = \min\left(1, \frac{C}{\|d\|_p}\right) \cdot d
\end{equation}
For the \gls{vmf} mechanism ($p=2$), requiring values on the unit sphere, we instead rescale all vectors to norm $C$~\cite{Faustini2022}:
  \( d_{\text{scaled}} = \frac{C}{\|d\|_2} \cdot d \).
Empirically, we observe that compression tends to fully utilise the available norm range.
Therefore, we fix $C = 1.0$, which is compatible with all \gls{dp} mechanisms.

\subheading{5) DP Mechanism}
We consider three \gls{dp} mechanisms for the conditional embedding module: Laplace \gls{lap}, Gaussian \gls{gau}, and \gls{vmf} \gls{V}.
Unless stated otherwise, we use the Laplace mechanism, which performed best in preliminary experiments.
The others variants are evaluated in ablation study III in \secref{eval_mechanisms}.
The Laplace and Gaussian mechanisms add \gls{iid} noise to each component of the clipped embedding:
\begin{equation}
  d_{\text{DP}} = d_{\text{clipped}} + \text{Noise}(\sigma)
\end{equation}
with noise scale $\sigma$ determined by the clipping norm $C$ and privacy parameters \gls{eps_c} (and \gls{delta_c} for \gls{gau}).
The \gls{vmf} mechanism replaces $d_{\text{scaled}}$ with a unit vector sampled from the \gls{vmf} distribution, using $d_{\text{scaled}}$ as the mean direction and a concentration parameter $\kappa$.
The values for $\sigma$ and $\kappa$ are detailed in the privacy analysis in \secref{sec_privacy_conditional}.

\subheading{Noise Schedule}
During training, we apply a noise schedule similar to the $\beta$-schedule used in diffusion models~\cite{DDPM2021}.
For each sample in a batch, we draw a random value $\beta \sim \gls{uni}(0, 1)$ and compute the final embedding as:
\begin{equation}
  d_{\text{DP}} = \beta \cdot d_{\text{DP}} + (1 - \beta) \cdot d_{\text{clipped}}.
\end{equation}
This approach allows the model to occasionally use less noisy embeddings, which helps stabilise training, while still adapting to the noise level used during generation.
The noise schedule \textbf{must not} be applied during generation, as this would violate the privacy guarantees.
However, since the training data is protected via \gls{dp-sgd}, applying the schedule during training does not compromise privacy \wrt $\gls{D_train}$.

\subheading{6) Decompression}
After the application of the \gls{dp} mechanism, the noisy embedding $d_{\text{DP}}$ is decompressed using one \gls{fc} layer to the embedding dimension $e$ used by the model.
We fix $e = 512$ according to the value used by DiffTraj~\cite{DiffTraj23}.

\subheading{Summary}\label{sec_types_of_cond_info}
 We consider six types of conditional embeddings:
\begin{enumerate}[label=\arabic*)]
  \item \textbf{None:} Unconditional models.
  \item \textbf{\DTcond:} Original DiffTraj~\cite{DiffTraj23} embedding.
  \item \textbf{\DTcond + \gls{lap}:} DiffTraj embedding with Laplace mechanism.
  \item \textbf{\Sample + \gls{lap}:} Full sample with Laplace mechanism.
  \item \textbf{\Sample + \gls{gau}:} Full sample with Gaussian mechanism.
  \item \textbf{\Sample + \gls{V}:} Full sample with \gls{vmf} mechanism.
\end{enumerate}
\Sample encoding without \gls{dp} is not considered, as this case does not provide any privacy and suffers the same risk as LSTM-TrajGAN~\cite{pets24_paper}.
The privacy guarantees of this conditional embedding module are proven in \secref{sec_privacy_conditional}.

\subsection{Model Architectures}\label{sec_architecture}

We consider three generative model types: Diffusion, \gls{gan}, and \gls{vae}, each with an \textit{unconditional} and \textit{conditional} variant.
The unconditional models follow standard structures (\secref{bg_generative_models}), while the conditional variants include the embedding module described in \secref{sec_conditional_info}.
Model architectures were kept similar to isolate the effect of model type.

\subheading{Diffusion}
As a diffusion model, we use the \gls{sota} DiffTraj~\cite{DiffTraj23} model described in \secref{sec_related}.
The only difference is the addition of \gls{dp-sgd} and the integration of the conditional embedding module outlined in the previous sections.
The DiffTraj model is based on the UNet~\cite{unet2015} architecture published by Rombach et al.~\citep{unet_code2021, unet_code2021a}.
The proposed \textit{Traj-UNet} consists of a downsampling and an upsampling module connected by a transition module.
Both the down- and upsampling modules consist of multiple stacked ResNet blocks based on \gls{conv1d} layers.
Moreover, each ResNet block receives the sum of the diffusion step's $t$ embedding and the conditional embedding (if used) as secondary input.
The downsampling blocks connect to the equivalent upsampling blocks through skip connections.
The transition module uses an attention mechanism~\cite{Vaswani2023} between two ResNet blocks to capture sequential dependencies.

\subheading{VAE}
For the \gls{vae}, we split the Traj-UNet into two parts, using the UNetEncoder and UNetDecoder~\cite{unet_code2021, unet_code2021a}.
The downsampling module represents the \gls{vae}'s encoder, while the upsampling module represents the decoder.
However, to enable both modules to capture sequential dependencies, we include the attention-based transition module both at the end of the encoder and the beginning of the decoder.
As we split the Traj-UNet into two separate modules, no skip connections are used.
Moreover, the ResNet blocks only receive the \condInfo as secondary input, the encoder returns the latent distribution parameters $\mu$ and $\sigma$ (\refer \secref{bg_vae}), and the decoder applies a \texttt{tanh} activation to the output.

\subheading{GAN}
For the \gls{gan}, we initially tested using a full Traj-UNet for both generator and discriminator, but the increased complexity caused unstable training.
Therefore, we used the UNetDecoder, also used by the \gls{vae}, as the generator and the UNetEncoder as the discriminator.
Two layers, a \gls{conv1d} and a \gls{fc} layer, were added to the discriminator to reduce the output to a single value.
Using a full Traj-UNet in only one of the modules also yielded poor results due to imbalance.
Since the standard adversarial loss caused unstable training, we used the \gls{wgan-lp} loss~\cite{wgan_lp} for better stability, despite slower convergence.
For non-\gls{dp-sgd} training, we set $n_{\text{critic}} = 5$ to maintain a near-optimal discriminator, as $n_{\text{critic}} = 1$ performed poorly.
In the evaluation, we report the total training steps, resulting in only \num{20 000} generator updates when using $n_{\text{critic}} = 5$.
This setup ensures a runtime comparable to the diffusion model for the same total number of updates.
Under \gls{dp-sgd}, we set $n_{\text{critic}} = 1$ as the added noise sufficiently weakened the generator, making additional discriminator updates unnecessary.
Thus, in this setting, generator updates equal the total steps.

\section{Privacy Analysis}\label{sec_privacy_analysis}

This section analyses the privacy guarantees of the proposed design.
As outlined in \secref{sec_threat_model}, we assume that training and generation use disjoint datasets, i.e., $\gls{D_real} = \gls{D_train} \cup \gls{D_test}$ and $\gls{D_train} \cap \gls{D_test} = \emptyset$.
First, \secref{sec_privacy_dpsgd} examines the guarantees \wrt the training data \gls{D_train} achieved through \gls{dp-sgd}.
Second, \secref{sec_privacy_conditional} analyses the guarantees \wrt the conditional information derived from $D_{gen}$ during generation.
Finally, \secref{sec_privacy_combined} combines these individual guarantees to provide a comprehensive analysis of the privacy guarantees for the final synthetic dataset.

\subsection{Privacy of Training}\label{sec_privacy_dpsgd}

To provide formal guarantees \wrt the training data \gls{D_train}, i.e., the blue path in \figref{fig_design_overview}, we apply \gls{dp-sgd}~\cite{Abadi2016}, as described in \secref{sec_method_dpsgd}.
We use the \texttt{PRV}~\cite{prv_accountant2021} accounting method and denote the privacy parameters for \gls{dp-sgd} as $(\gls{eps_s}, \gls{delta_s})$.
Both threat models \ballnumber{B} and \ballnumber{D} (\refer \secref{sec_threat_model}) rely on \gls{dp-sgd} to ensure privacy \wrt the training data.

\begin{corollary}
  All models trained with \gls{dp-sgd} provide \update{trajectory-level} $(\gls{eps_s}, \gls{delta_s})$-differential privacy \gls{wrt} the training data $D_{train}$ \update{under the add-or-remove adjacency relation}.
\end{corollary}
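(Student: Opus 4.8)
\noindent\emph{Proof plan.}
The plan is to obtain the statement as a near-immediate consequence of the DP-SGD privacy analysis together with the post-processing immunity of \gls{dp}. First I would invoke the guarantee of Abadi et al.~\cite{Abadi2016}: \gls{dp-sgd} clips each per-sample gradient to norm $C$, adds Gaussian noise calibrated to $C$ and the noise multiplier $\sigma$, and composes these (Poisson-)subsampled Gaussian mechanisms over all training steps. Since $\sigma$ is pre-computed from the target budget via the \texttt{PRV} accountant~\cite{prv_accountant2021}, the entire sequence of gradient updates -- and hence the released parameter vector $\theta$ -- satisfies $(\gls{eps_s}, \gls{delta_s})$-differential privacy under the add-or-remove adjacency relation, with the unit of privacy being one training record. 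In our setup each training record is exactly one trajectory (trajectory-level \gls{uop}, \refer \secref{bg_dp}), so this is precisely trajectory-level $(\gls{eps_s}, \gls{delta_s})$-DP \wrt \gls{D_train}.

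For the diffusion model and the \gls{vae}, where \gls{dp-sgd} is applied to \emph{all} gradients, the argument then closes directly: the trained model depends on \gls{D_train} only through this DP-protected sequence of iterates, and any downstream use -- in particular sampling the synthetic dataset \gls{D_syn} -- is post-processing, to which \gls{dp} is invariant (\refer \secref{bg_dp}). The same covers the conditional embedding module, since during training its gradients are part of the per-sample gradients that \gls{dp-sgd} clips and perturbs; the training-time noise schedule is merely additional internal randomness of the (already randomized) training mechanism and does not affect the guarantee, as observed in \secref{sec_conditional_info}.

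The \gls{gan} is the case requiring care, and I expect it to be the main obstacle: here only the generator is trained with \gls{dp-sgd}, while the discriminator observes \gls{D_train} without protection. I would rely on two facts: (i) the discriminator is never released and plays no role at generation time, and (ii) the only channel through which \gls{D_train} can influence the released generator is the generator's own (clipped and noised) gradient updates. Combining these, I would argue by induction over training steps: at step $t$ the discriminator is a function of \gls{D_train} and the generator history $\theta_G^{(1)},\dots,\theta_G^{(t)}$, the latter already DP by the inductive hypothesis, and the step producing $\theta_G^{(t+1)}$ factors through a sanitized gradient, so adaptive composition plus post-processing keeps the extended sequence DP; applying post-processing once more transfers the guarantee to \gls{D_syn}. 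The delicate point is making step (ii) rigorous -- i.e.\ verifying that sanitizing the generator-side gradients genuinely bottlenecks all information flow from \gls{D_train} to the released generator, in the spirit of gradient-sanitized GAN training -- after which the corollary follows.
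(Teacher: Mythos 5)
Your treatment of the diffusion model and the \gls{vae} coincides with the paper's own proof, which consists of nothing more than deferring to the \gls{dp-sgd} analysis of Abadi et al.\ and the \texttt{PRV} accountant, with release of the synthetic data covered by post-processing; for those two models your write-up is, if anything, more explicit than the paper's one-line citation.

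The \gls{gan} case is where the substance lies, and you have correctly located the weak point --- but your proposed induction does not close it, and neither does the paper, which simply cites the standard \gls{dp-sgd} proof without addressing the asymmetric training. The standard analysis bounds the sensitivity of one noisy update by the clipping norm $C$ because adding or removing one record changes exactly one per-sample gradient in the sum. For the generator, however, the ``samples'' being iterated over are noise vectors $z_i$, and every per-sample gradient $\nabla_{\theta_G}\bigl[-D_\phi(G_{\theta_G}(z_i))\bigr]$ depends on the discriminator parameters $\phi$, which are an unsanitised function of all of \gls{D_train}. Adding or removing a single trajectory therefore perturbs $\phi$ and with it \emph{every} clipped per-sample generator gradient simultaneously, so the trajectory-level sensitivity of the summed update is bounded only by $2BC$ (with $B$ the generator batch size), not by $C$, and Gaussian noise calibrated for sensitivity $C$ does not yield $(\varepsilon_s,\delta_s)$-\gls{dp} \wrt \gls{D_train}. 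Privacy amplification by subsampling is also unavailable, since a trajectory influences the discriminator --- and hence every generator step --- regardless of whether it is drawn into the current batch. Your step (ii), that sanitising the generator-side gradients bottlenecks all information flow from \gls{D_train}, is exactly the assertion that fails: those gradients are sanitised \wrt the noise-vector adjacency, not the trajectory adjacency. Repairing the corollary for the \gls{gan} would require either training the discriminator with \gls{dp-sgd} and treating the generator as post-processing, or a GS-WGAN-style sensitivity analysis of the discriminator-to-generator gradient; as written, neither your induction nor the paper's citation establishes it.
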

\noindent As we use standard \gls{dp-sgd} based on an established library, we refer to the proof of the privacy guarantees to \citet{Abadi2016} and \citet{prv_accountant2021}.
Note that \gls{dp-sgd} does not provide privacy guarantees \wrt the conditional information used during generation, which we address in the following.

\subsection{Privacy Proof for Conditional Information}\label{sec_privacy_conditional}

In the following, we derive the privacy guarantees \wrt the conditional information used as a secondary input for the generative model.
These guarantees relate to the dotted path from the real data in \figref{fig_design_overview} and are relevant for threat models \ballnumber{C} and \ballnumber{D}.
As outlined in \secref{sec_threat_model}, all guarantees are defined at the trajectory level, i.e., with respect to individual trajectories in the dataset.
First, we formalise the per-sample compression function \gls{f_c} (\refer \secref{sec_conditional_info}):

\begin{definition}[Per-Sample Compression Function]
  The per-sam\-ple compression function \gls{f_c} maps an input sample $x \in \gls{reels}^{\gls{d_in}}$ to a lower-dimensional space $\gls{reels}^{\gls{d_out}}$, followed by norm clipping (or scaling for the \gls{vmf} mechanism):
  \begin{equation}
    f_c: \mathbb{R}^{\gls{d_in}} \rightarrow \gls{reels}^{\gls{d_out}}, \quad x \mapsto f(x) \cdot \min\left\{\frac{C}{\|f(x)\|_p}, 1\right\}
  \end{equation}
  where $f(x)$ is the output of a neural network, \gls{C} is the norm bound, \gls{p} the type of norm used, and $\gls{d_in}$ and $\gls{d_out}$ denote the input and output dimensions, respectively.
  Clipping ensures the output norm does not exceed $C$, while scaling (used for \gls{vmf}) sets it exactly to $C$.
\end{definition}

Here, samples represent trajectories, i.e., $x \in \mathbb{R}^{2 \times L}$, where \gls{L} denotes the sequence length.
For simplicity, we flatten the sequence such that $\gls{d_in} = 2L$.
Note that the norm type \gls{p} needs to be consistent throughout this proof, i.e., if the Laplace mechanism is used, $p=1$ must hold for all equations, while $p=2$ is required for the Gaussian and \gls{vmf} mechanisms.

Next, we extend this function to a per-dataset function $F_c$.
We consider a trajectory dataset $D = \{x_1, x_2, \ldots, x_n\}$, consisting of $n$ samples from $\mathbb{R}^{d_{in}}$.
As described in \secref{sec_conditional_info}, we select $m < n$ samples uniformly at random without replacement from the dataset for generation, where $m$ is a dataset-independent constant determined in advance.

\begin{definition}[Dataset Compression Function]
  The dataset compression function $F_c$ receives $m$ samples and maps each sample independently to a lower-dimensional space $\mathbb{R}^{m \times d_{out}}$ using the per-sample function $f_c$:
  \begin{equation}
    F_c: \mathbb{R}^{m \times d_{in}} \rightarrow \mathbb{R}^{m \times d_{out}}, D \mapsto \{f_c(x_i)\}_{i=1}^m
  \end{equation}
\end{definition}
Each sample $x'_i = f_c(x_i)$ in the output only depends on the corresponding sample $x_i$ in the input dataset, i.e., $F_c$ represents a parallel composition of $f_c$ over disjoint subsets of the dataset $D$.

\update{We target trajectory-level privacy as the \gls{uop} and use the replace-one adjacency relation to define the neighbourhood of two datasets, i.e., one dataset is obtained from the other by replacing one sample with another.
The common add-or-remove adjacency relation is not applicable here, as the output dimensionality of the compression function $F_c$ depends on the number of input samples.}
Next, we define the privacy mechanism $M^{Lap}_c$ based on the Laplace mechanism.

\begin{definition}[Laplace Compression Mechanism with Subsampling]
  The compression mechanism $M^{Lap}_c$ receives a dataset $D$, \update{subsamples $m$ samples uniformly at random without replacement},
  and adds independently drawn Laplace noise to all components of the output of the dataset compression function $F_c$:
  \begin{equation}
    M^{Lap}_c(D): \mathbb{R}^{n \times d_{in}} \rightarrow \mathbb{R}^{m \times d_{out}} D \mapsto F_c(D) + Y
  \end{equation}
  where $Y \in \mathbb{R}^{m \times d_{out}}$ is a matrix with $Y_{ij} \sim Lap(\lambda)$.
\end{definition}

\begin{theorem}[Privacy of Laplace Compression Mechanism]
  The mechanism $M^{Lap}_c$ provides \tashsays{$(\log(1 + \frac{m}{n}(e^\varepsilon - 1)), 0)$-\gls{dp}} %
  \wrt the replace-one adjacency definition for \tashsays{$\lambda = \frac{2C}{\varepsilon}$} if $p =1$.
\end{theorem}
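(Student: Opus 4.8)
The plan is to split the argument into two independent pieces and then compose them. First, I would show that the \emph{inner} mechanism --- apply $F_c$ to a \emph{fixed} multiset of $m$ trajectories and add i.i.d.\ Laplace noise with scale $\lambda = 2C/\varepsilon$ --- is $\varepsilon$-differentially private under the replace-one relation on those $m$ samples. Second, since $M^{Lap}_c$ is exactly ``draw $m$ of the $n$ samples uniformly at random without replacement, then run the inner mechanism'', I would invoke privacy amplification by subsampling without replacement, which upgrades $\varepsilon$ to $\varepsilon' = \log\!\big(1 + \frac{m}{n}(e^\varepsilon - 1)\big)$. Because the Laplace mechanism contributes $\delta = 0$ and subsampling amplification maps $\delta = 0$ to $\delta = 0$, the composed guarantee is exactly $(\varepsilon', 0)$-DP.

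For the first piece, note that the clipping in the definition of $f_c$ ensures $\|f_c(x)\|_1 \le C$ for every input $x$ (we are in the case $p = 1$). Consider two replace-one neighbours among the $m$ fixed samples that differ only in the $j$-th entry, $x_j$ versus $x_j'$. Since $F_c$ maps each sample independently to its own output block, the two outputs of $F_c$ coincide on every block but the $j$-th, where they differ by $\|f_c(x_j) - f_c(x_j')\|_1 \le \|f_c(x_j)\|_1 + \|f_c(x_j')\|_1 \le 2C$ by the triangle inequality. Hence the $l_1$-sensitivity of $F_c$ under replace-one is at most $2C$, so the Laplace mechanism with scale $\lambda = \Delta_1 F_c / \varepsilon = 2C/\varepsilon$ is $\varepsilon$-DP, as recalled in Section~\ref{bg_laplace}. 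Equivalently, this is a parallel composition of the per-sample Laplace mechanism, since replace-one touches only one of the disjoint output blocks.

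For the second piece, I would couple the two runs of $M^{Lap}_c$ on replace-one neighbours $D$ and $D'$, which differ only at index $i_0$, by drawing the \emph{same} uniformly random size-$m$ index set $I$ for both. If $i_0 \notin I$ (probability $1 - m/n$) the two subsamples coincide and the outputs are identically distributed; if $i_0 \in I$ (probability $m/n$) the two subsamples are themselves size-$m$ replace-one neighbours, so the first piece bounds the output likelihood ratio by $e^\varepsilon$. Writing the probability of an arbitrary output event as a mixture over these two cases, a case analysis of the kind carried out in the standard subsampling-amplification proofs --- also using that a selected sample can be removed and re-drawn as an unselected one, which is again a replace-one step inside the subsample --- yields the ratio bound $1 + \frac{m}{n}(e^\varepsilon - 1) = e^{\varepsilon'}$. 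I would cite this amplification lemma rather than reprove it in full.

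The main obstacle I expect is the second piece: the off-the-shelf amplification statements are usually phrased for the add-or-remove relation, but here the output dimension of $F_c$ scales with the number of inputs, which forces the fixed-size replace-one relation throughout. I would therefore need to check that the coupling closes under replace-one --- in particular the two-sided likelihood-ratio estimate above --- rather than quote an add-or-remove lemma directly. A secondary point worth stating explicitly is that $m$ must be a data-independent constant fixed in advance (as stipulated in Section~\ref{sec_conditional_info}), since otherwise the effective sampling rate would itself depend on $D$ and the amplification argument would not apply.
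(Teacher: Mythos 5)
Your proposal is correct and follows essentially the same route as the paper: bound the $l_1$-sensitivity of $F_c$ under replace-one by $2C$ via the clipping bound (so the Laplace mechanism with $\lambda = 2C/\varepsilon$ is $\varepsilon$-DP on the subsample), then apply privacy amplification by uniform subsampling without replacement to obtain $\log\bigl(1 + \tfrac{m}{n}(e^\varepsilon - 1)\bigr)$. Your worry about the amplification lemma being stated only for add-or-remove is unfounded here — the result the paper invokes (Theorem~9 of Balle et al.) is stated precisely for subsampling without replacement under the substitution relation, which is exactly the coupling you sketch — but otherwise your additional detail on the sensitivity and coupling steps simply fills in what the paper leaves to the cited references.
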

\begin{proof}
  The result follows directly from the proof for the Laplace mechanism (\refer Theorem~3.6 in~\cite{Dwork2013}). The value for $\lambda$ corresponds to the sensitivity of the Laplace on replace-one adjacent datasets. Uniform subsampling without replacements provides amplification as per Theorem~9 in \cite{balle2018privacy}.
\end{proof}

The Gaussian~\cite{Dwork2013} or \gls{vmf}~\cite{Weggenmann2021} mechanisms can serve as a drop-in replacement for the Laplace mechanism.
For these mechanisms, one must use the $l_2$-norm instead of the $l_1$-norm (i.e., $p=2$), and additionally, the \gls{vmf} requires fixing the norm bound $C = 1$ to obtain unit vectors.
The Gaussian mechanism is useful when $d_{out}$ is large, as the $l_2$-norm scales with $\sqrt{d_{out}}$ instead of $d_{out}$.
The \gls{vmf} mechanism has the advantage that only the direction of the conditional information vector changes, but not its norm, which might benefit some settings.
We can define the Gaussian Compression Mechanism $M^{\mathcal{G}}_c$ as follows:
\begin{definition}[Gaussian Compression Mechanism with Subsampling]
  The compression mechanism $M^{\mathcal{G}}_c$ receives a dataset $D$, \update{subsamples $m$ samples uniformly at random without replacement},
  and adds independently drawn Gaussian noise to all components of the output of $F_c$:
  \begin{equation}
    M^{\mathcal{G}}_c(D): \mathbb{R}^{n \times d_{in}} \rightarrow \mathbb{R}^{m \times d_{out}} D \mapsto F_c(D) + Z
  \end{equation}
  where $Z \in \mathbb{R}^{m \times d_{out}}$ is a matrix with $Z_{ij} \sim \mathcal{N}(0, \sigma^2)$.
\end{definition}
\begin{theorem}[Privacy of Gaussian Compression Mechanism]
  For any $\varepsilon > 0$, $\delta \in (0, 1)$, the compression mechanism $M^{\mathcal{G}}_c$ provides \tashsays{$(\log(1 + \frac{m}{n}(e^\varepsilon - 1)), \frac{m}{n} \delta)$-differential privacy} \wrt the replace-one adjacency definition for $\sigma = \frac{\alpha 2C}{\sqrt{2\varepsilon}}$ if $p = 2$, where $\alpha$ is defined as per Algorithm~1 in~\cite{Balle2018}.
\end{theorem}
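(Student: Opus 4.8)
The plan is to mirror the proof of the Laplace Compression Mechanism: first bound the $l_2$-sensitivity of the dataset compression function $F_c$ under the replace-one relation, then calibrate Gaussian noise to this sensitivity via the analytic Gaussian mechanism of \citet{Balle2018}, and finally apply privacy amplification by subsampling without replacement to obtain the stated parameters.

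First I would show that the $l_2$-sensitivity of $F_c$ is at most $2C$. Because $F_c$ maps each of the $m$ subsampled trajectories independently to its own row --- a parallel composition of $f_c$ over disjoint rows --- replacing a single trajectory $x_i$ by $x_i'$ changes exactly one row of the output matrix, from $f_c(x_i)$ to $f_c(x_i')$, while leaving all other rows untouched. The clipping/scaling step guarantees $\|f_c(x_i)\|_2 \le C$ and $\|f_c(x_i')\|_2 \le C$, so the triangle inequality gives $\|f_c(x_i) - f_c(x_i')\|_2 \le 2C$, and hence the flattened $l_2$-distance between the two output matrices is bounded by $2C$.

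Next I would invoke the analytic Gaussian mechanism (Algorithm~1 in~\cite{Balle2018}) for a function of $l_2$-sensitivity $\Delta_2 = 2C$: for every $\varepsilon > 0$ and $\delta \in (0,1)$ it produces a multiplier $\alpha$ such that adding i.i.d.\ $\mathcal{N}(0, \sigma^2)$ noise with $\sigma = \alpha \cdot 2C / \sqrt{2\varepsilon}$ to all components yields $(\varepsilon, \delta)$-differential privacy, which is precisely the noise used in $M^{\mathcal{G}}_c$ before subsampling is taken into account. It then remains to account for the uniform subsampling of $m$ out of $n$ trajectories without replacement. With sampling rate $\gamma = \frac{m}{n}$, under the replace-one relation the two subsampled datasets are identically distributed with probability $1 - \gamma$ and otherwise differ in a single substituted record, so the amplification result (Theorem~9 in~\cite{balle2018privacy}, as used in the Laplace proof) upgrades the $(\varepsilon, \delta)$-DP guarantee of $F_c + Z$ to $\bigl(\log(1 + \frac{m}{n}(e^\varepsilon - 1)), \frac{m}{n}\delta\bigr)$-differential privacy for $M^{\mathcal{G}}_c$, establishing the claim.

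The step I expect to be the main obstacle is the bookkeeping around subsampling: one must verify that Theorem~9 of~\cite{balle2018privacy} applies in the approximate-DP regime ($\delta > 0$) and for the specific combination of without-replacement subsampling with the replace-one neighbouring relation --- in particular that the overlap probability equals $1 - \frac{m}{n}$ independently of which record is replaced, which relies on $m$ being fixed in advance --- and that amplification composes with the analytic (rather than the classical) Gaussian guarantee without loss. By contrast, the sensitivity bound and the instantiation of $\sigma$ are routine once the Laplace proof is in place.
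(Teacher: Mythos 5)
Your proposal is correct and follows essentially the same route as the paper's own (very terse) proof: bound the replace-one $l_2$-sensitivity of $F_c$ by $2C$ via independence across rows and the clipping bound, invoke the analytic Gaussian mechanism of \citet{Balle2018}, and apply subsampling amplification via Theorem~9 of \cite{balle2018privacy}. You simply make explicit the sensitivity and bookkeeping steps that the paper delegates to the cited results.
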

\begin{proof}
  The theorem follows from the proof of the Analytical Gaussian mechanism (\refer Theorem~9 in \cite{Balle2018}). Uniform subsampling without replacements provides amplification as per  Theorem~9 in \cite{balle2018privacy}.
\end{proof}

Finally, the \gls{vmf} mechanism $M^{\mathcal{V}}_c$ can be defined as:
\begin{definition}[VMF Compression Mechanism with Subsampling]
  The compression mechanism $M^{\mathcal{V}}_c$ receives a dataset $D$, \update{subsamples $m$ samples uniformly at random without replacement},
  and independently samples from a \gls{vmf} distribution for each vector:
  \begin{equation}
    M^{\mathcal{V}}_c(D): \mathbb{R}^{n \times d_{in}} \rightarrow \mathbb{R}^{m \times d_{out}} D \mapsto \left\{\mathcal{V}(\kappa, F_c(D)_i)\right\}_{i=1}^m
  \end{equation}
  where $\mathcal{V}(\kappa, x)$ denotes a sample from the \gls{vmf} distribution with concentration parameter $\kappa$ and mean direction $x$.
  Note that $C=1$ as the \gls{vmf} mechanism requires unit vectors.
\end{definition}
\begin{theorem}[Privacy of VMF Compression Mechanism]
  The mechanism $M^{\mathcal{V}}_c$ provides \tashsays{$(\log(1 + \frac{m}{n}(e^\varepsilon - 1)), 0)$-differential privacy} \wrt the replace-one adjacency definition for $C=1$ and $\kappa = \frac{\varepsilon}{2C} = \frac{\varepsilon}{2}$ if $p = 2$.
\end{theorem}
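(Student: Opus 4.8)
The plan is to follow the same two-step template used for the Laplace and Gaussian compression mechanisms above: first establish an $(\varepsilon,0)$-\gls{dp} guarantee for the \emph{non-subsampled} version of the mechanism, by combining the base \gls{vmf} mechanism of \citet{Weggenmann2021} with a sensitivity bound and parallel composition, and then obtain the stated bound via privacy amplification by subsampling.

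\emph{Sensitivity under replace-one.} Let $D$ and $D'$ be replace-one adjacent, so they differ in exactly one sample. Since $f_c$ is applied to each sample independently, $F_c(D)$ and $F_c(D')$ agree on all components except one, say the $j$-th, where they equal $v = f_c(x_j)$ and $v' = f_c(x'_j)$. For the \gls{vmf} variant the compression step \emph{rescales} (rather than clips), so $\|v\|_2 = \|v'\|_2 = C$ exactly; by the triangle inequality $\|v - v'\|_2 \le \|v\|_2 + \|v'\|_2 = 2C$. Hence the $l_2$-sensitivity of the per-component map is $\Delta_2 = 2C$, and with $C = 1$ this is $\Delta_2 = 2$.

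\emph{Base guarantee and parallel composition.} By the \gls{vmf} mechanism (\refer \secref{bg_vmf} and \citet{Weggenmann2021}), drawing a unit vector from $\mathcal{V}(\kappa, \mu)$ is $(\varepsilon,0)$-\gls{dp} in the mean direction $\mu$ whenever $\kappa = \varepsilon / \Delta_2$; substituting $\Delta_2 = 2C$ gives exactly $\kappa = \varepsilon/(2C) = \varepsilon/2$. Only the $j$-th output vector of $M^{\mathcal{V}}_c$ depends on the replaced record, and each output vector is a function of a disjoint part of the input; therefore $M^{\mathcal{V}}_c$ \emph{before} subsampling is a parallel composition of per-component \gls{vmf} mechanisms and is $(\varepsilon,0)$-\gls{dp} with respect to the replace-one relation.

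\emph{Amplification and obstacle.} Finally, uniform subsampling of $m$ out of $n$ records without replacement amplifies any $(\varepsilon,0)$-\gls{dp} mechanism to $(\log(1 + \frac{m}{n}(e^\varepsilon - 1)), 0)$-\gls{dp} by Theorem~9 in \cite{balle2018privacy}, which is the claimed statement. The proof is largely routine given this machinery; the points needing care are (i) that the rescaling step truly forces every compressed vector onto the sphere of radius $C$, so that $\Delta_2 = 2C$ is the correct (diameter) bound, and (ii) that the \gls{vmf} guarantee of \citet{Weggenmann2021} is phrased in terms of the $l_2$-distance between mean directions so it composes cleanly with this sensitivity — and that the amplification result of \cite{balle2018privacy} applies under replace-one adjacency with fixed-size sampling, exactly as invoked in the Laplace and Gaussian cases.
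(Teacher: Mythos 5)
Your proposal is correct and follows essentially the same route as the paper: the paper's proof simply delegates the base $(\varepsilon,0)$ guarantee to Corollary~2 of \citet{Faustini2022} (which packages exactly the per-record sensitivity bound $\Delta_2 = 2C$ under replace-one, the \gls{vmf} guarantee with $\kappa = \varepsilon/\Delta_2$, and the parallel-composition argument you spell out) and then invokes Theorem~9 of \cite{balle2018privacy} for amplification by subsampling without replacement, just as you do. Your version merely unpacks that citation into its constituent steps, which is a sound and slightly more self-contained presentation of the same argument.
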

\begin{proof}
  This theorem follows from Corollary~2 in \citet{Faustini2022}, where the batches $B$ and $B'$ in the corollary correspond to our subsets of $D$ and $D'$, respectively, and uniform sampling without replacements provides amplification as per Theorem~9 in \cite{balle2018privacy}.
\end{proof}

\subsection{Combined Privacy Analysis}\label{sec_privacy_combined}

The training data $D_{train}$ influences the generated dataset solely through its impact on the model parameters.
Thus, whether conditional information used during training is protected by \gls{dp} does not affect the privacy guarantees \gls{wrt} $D_{train}$.
Accordingly, when using \gls{dp-sgd} with privacy parameters $\varepsilon_s$ and $\delta_s$, the overall privacy for the training set is $(\varepsilon_s, \delta_s)$ under the add-or-remove adjacency relation.
Without  \gls{dp-sgd}, no guarantees \gls{wrt} $D_{train}$ are provided.

The privacy \gls{wrt} $D_{test}$ is based on the privacy mechanism (\secref{sec_privacy_conditional}) used during generation.
Using a privacy mechanism during training does not influence the guarantees related to $D_{test}$.
The generation process provides $(\varepsilon_c, \delta_c)$-\gls{dp} \gls{wrt} the conditional information under the replace-one adjacency relation, if the \gls{dp} conditional information mechanism is used.
Unconditional models that do not use any input inherently provide $(0, 0)$-\gls{dp} \gls{wrt} conditional information, as the samples in $D_{test}$ are never accessed.

\update{Due to the differing adjacency relations, the privacy guarantees \gls{wrt} $D_{train}$ and $D_{test}$ cannot (trivially) be combined using the parallel composition theorem despite $D_{train}$ and $D_{test}$ being disjoint.
We note that the replace-one adjacency relation is considered to be approx.\ twice as strong as the add-or-remove adjacency relation~\cite{dpfyML} (\refer \secref{bg_dp}).
Thus, we evaluate the \gls{dp} \condInfo mechanism for both $\varepsilon_c = 10$, and $\varepsilon_c = 20$, as the latter provides comparable privacy to \gls{dp-sgd} with $\varepsilon_s = 10$.}

Referring back to the threat model (\secref{sec_threat_model}):
Setting \ballnumber{A} provides no privacy guarantees,
\ballnumber{B} achieves $(\varepsilon_s, \delta_s)$-\gls{dp} \gls{wrt} training data $D_{train}$ under the add-or-remove adjacency relation,
\ballnumber{C} ensures $(\varepsilon_c, \delta_c)$-\gls{dp} \gls{wrt} test set $D_{test}$ under the replace-one adjacency relation, and
\ballnumber{D} provides both $\left(\varepsilon_s, \delta_s\right)$-DP \gls{wrt} training data under add-or-remove adjacency and $(\varepsilon_c, \delta_c)$-DP \gls{wrt} test set under replace-one adjacency.
\update{All guarantees are defined at the trajectory level, i.e., with respect to individual samples in the dataset, as outlined in \secref{sec_threat_model}.}

\section{Evaluation}\label{sec_eval}

In this section, we report the results of our experiments to address the research questions outlined in \secref{sec_intro}.
First, we define the experimental setup in \secref{sec_setup} and introduce the evaluation metrics in \secref{sec_metrics}.
\secref{sec_configurations} describes the used configurations, and \secref{sec_results} presents the results.

\subsection{Experimental Setup}\label{sec_setup}

\subheading{System Specifications}\label{sec_hardware}
We performed the final experiments on a \gls{hpc} cluster with notes having two Intel Xeon Platinum 8452Y CPUs (72 cores in total), \SI{512}{\giga\byte} of RAM, running SUSE Linux Enterprise Server \num{15} SP4. 
For each experiment, we restricted the resources per node to one NVIDIA H100 NVL GPU (\SI{94}{\giga\byte}), 18 CPU cores, and \SI{128}{\giga\byte} of RAM.

\subheading{Datasets}\label{sec_datasets}
Two real-world trajectory datasets are used for evaluation.
Following the preprocessing steps of DiffTraj~\cite{DiffTraj23}, both datasets are resampled to a fixed length and limited to a bounding box.
The bounding box is chosen independent of the data based on geographical constraints, such as the fifth ring road for Beijing, to prevent privacy leakage.
The first dataset is \textit{Porto}\footnote{\scriptsize
  $\left(lat_{min}, lon_{min}, lat_{max}, lon_{max}\right) = \left(41.10, -8.72, 41.24, -8.50\right)$
}~\cite{porto_taxi}, containing GPS trajectories of taxis in Porto, Portugal.
This dataset consists of \num{1559209} trajectories, resampled to a fixed length of \num{100} locations per trajectory.
The second dataset is \textit{Geolife}\footnote{\scriptsize
  $\left(lat_{min}, lon_{min}, lat_{max}, lon_{max}\right) = \left(39.75, 116.19, 40.03, 116.56\right)$
}~\cite{GeoLife1}, which includes GPS trajectories from \num{182} users.
We use the preprocessed version provided in the artifacts of \citet{pets24_paper}, containing \num{69504} trajectories, resampled to a fixed length of \num{200} locations per trajectory.

\update{
  \subheading{Non-DL Baseline}
  As detailed in \secref{sec_related}, we include PrivTrace~\cite{PrivTrace2023} as a \gls{sota} non-\gls{dl} baseline.
  We use the authors' implementation\footnote{\scriptsize \url{https://github.com/DpTrace/PrivTrace}} without modification.
  It uses hard-coded parameters $K$ and $\kappa$ that differ from those in the paper.
  As we received no response to our communication attempts and the formulas in~\cite{PrivTrace2023} yield an intractable number of Markov states (i.e., the code does not terminate), we retain the implementation defaults.
  Since PrivTrace generates variable-length trajectories, we resample them to the fixed length of each dataset for a fair comparison.
}

\subheading{Evaluation Process}
First, we train each model variant on the respective training set.
We split each dataset $D$ into five folds such that each run uses a different fold for testing (\gls{D_test}) and using \SI{80}{\percent} of the dataset for training (\gls{D_train}).
Each configuration is repeated five times with different random seeds, and the results are averaged.
The number of epochs is determined such that the number of steps is larger or equal to \num{100 000}.
After training, we generate \num{3000} synthetic trajectories for each model and dataset and compare them to \num{3000} real trajectories selected randomly from the respective test dataset \gls{D_test}.
If conditional information is used, the synthetic trajectories are matched with the real trajectories from which the conditional information was derived for the trajectory-level evaluation.
Otherwise, we match the synthetic trajectories to the real trajectories using the Hungarian algorithm~\cite{Kuhn2005} and the average per-trajectory haversine distance~\cite{haversine-formula} as the cost function.

\subsection{Evaluation Metrics}\label{sec_metrics}
The selection of evaluation metrics follows the guidelines of the framework proposed by \citet{pets24_paper}, with parameters aligned to related work wherever possible.
Implementation details are provided in the code repository\footref{ref_code}.
Unless stated otherwise, distances are computed using the Haversine formula~\cite{haversine-formula} and reported in metres.

To evaluate how well the generated data matches the overall \textit{point distribution}, we use the \textit{\gls{swd}}.
This metric compares the full 2D distribution of points directly, without requiring to discretise the data into a grid first.
For comparability with existing methods~\cite{DiffTraj23}, we also report (i) the \textit{\gls{jsd}} after discretising the space into a $64 \times 64$ grid, and (ii) the \textit{\gls{hd}} between \num{100000} randomly sampled real and synthetic points.

\textit{Point statistics} evaluate how well local spatial patterns are preserved in the synthetic data.
(i) \textit{Range queries}~\cite{Chen2011} estimate local density by counting how many points lie within a radius $r$ of a query location.
We report the \textit{\gls{mre}} over \num{200} random queries with radii $r \in \{50, 100, 200, 500, 1000\}\unit{\meter}$.
(ii) \textit{Hotspot analysis}~\cite{Cunningham2022} checks whether both datasets identify the same high-density regions.
The space is divided into a $g\times g$ grid for $g = 128$, and cells exceeding the $95$th percentile in density are marked as hotspots.
The \textit{\gls{sdc}} quantifies the overlap between real and synthetic hotspot sets.

At the \textit{trajectory level}, we assess both preservation and statistical properties.
(i) \textit{Trajectory preservation} measures how closely synthetic trajectories replicate the shape and location of real ones.
Reported metrics include the per-trajectory \textit{\gls{hd}}, the \textit{normalised Haversine distance} (divided by trajectory length), and the \textit{\gls{dtw}}\cite{dtw2007}, which aligns points between two trajectories to minimise cumulative distance.
(ii) \textit{Trajectory statistics} describe global mobility behaviour, including the \textit{\gls{ttd}}\cite{Sun2023} (total length of a trajectory), the \textit{trajectory diameter}\cite{Sun2023} (maximum distance between any two points), and the \textit{trip error}~\cite{Sun2023}, which compares start-end patterns via transition probabilities between grid cells (using a $g\times  g$ grid for $g=16$).
For the \gls{ttd} and diameter, histograms are constructed using \num{55} bins (based on the rule of thumb $\sqrt{3000} \approx 55$) and compared using the \textit{\gls{jsd}}.
The trip error is computed by flattening the transition matrices $U \in g^2 \times g^2$ and comparing them via \textit{\gls{jsd}}.

\subsection{Configurations}\label{sec_configurations}

To answer the three research questions, we vary the following parameters in our evaluation:
\begin{enumerate*}[label=\Roman*)]
  \item First, we train each base configuration with and without \gls{dp-sgd} to assess its impact across different settings.
  \item Second, we evaluate all six types of conditional information described in \secref{sec_types_of_cond_info}, namely:
  \begin{enumerate*}[label=\arabic*)]
    \item no conditional information,
    \item the DiffTraj embedding (\DTcond),
    \item \DTcond with Laplace noise (\gls{lap}),
    \item the proposed \Sample embedding with Laplace noise (\gls{lap}),
    \item Gaussian noise (\gls{gau}), and
    \item \gls{vmf} noise (\gls{vmf}).
  \end{enumerate*}
  \item Third, we test the three model architectures from \secref{sec_architecture} -- DiffTraj, UNetVAE, and UNetGAN -- to analyse the effect of the architecture on performance.
\end{enumerate*}

Unless stated otherwise, we use $\gls{eps_s} = 10$ and $\delta = \nicefrac{1}{n^{1.1}}$ for \gls{dp-sgd}, as reasoned in \secref{sec_method_dpsgd}.
The maximum gradient norm for \gls{dp-sgd} is set to $C = 0.1$.
For conditional information, we use $\gls{eps_c} = 10$ and compute $\delta$ analogously based on the test set size.
Since privacy amplification depends on the number of samples $m$, we conservatively assume $m = n$, i.e., the synthetic dataset matches the test set size.
If fewer samples are generated (as in our evaluation), the actual guarantee is stronger than the reported $\gls{eps_c}$ due to amplification by subsampling.
The dimensionality of compressed conditional information is fixed to $\gls{d_out} = 8$.

Each configuration is identified by a unique ID.
Configurations using \gls{dp} \condInfo are repeated for $\gls{eps_c} = 20$ and marked with an additional \texttt{\textcolor{RoyalBlue}{a}}, e.g., \texttt{5\textcolor{RoyalBlue}{a}} denotes the same setting as \texttt{5} but with a higher privacy budget for conditional information.
We refer to IDs \texttt{1}--\texttt{24} as the \textit{base configurations}, used to address the three main research questions.
Additionally, we include three ablation studies (IDs \texttt{25}--\texttt{39}) to isolate specific parameter effects.
The non-\gls{dl} baseline PrivTrace is ID \texttt{40}.
Due to high computational cost, ablation studies 2 and 3 are limited to the Porto dataset.

A preliminary hyperparameter search identified suitable learning rates, batch sizes, epochs, and clipping norms, aiming for consistency across configurations.
A full search with formal tuning was not feasible due to resource constraints (\refer \secref{eval_runtimes}).
A complete listing of all configurations is provided in \tabref{tab_all_cases} in the appendix.

\subsection{Results}\label{sec_results}

\begin{figure*}[ht]
    \centering
    \subfloat[
        Conditional models provide no formal guarantees (\ballnumber{A}).
        Unconditional models provide formal guarantees \wrt the test data (\ballnumber{C}).
        \label{fig_examples_nodp_porto}
    ]{
        \includegraphics[width=0.98\textwidth]{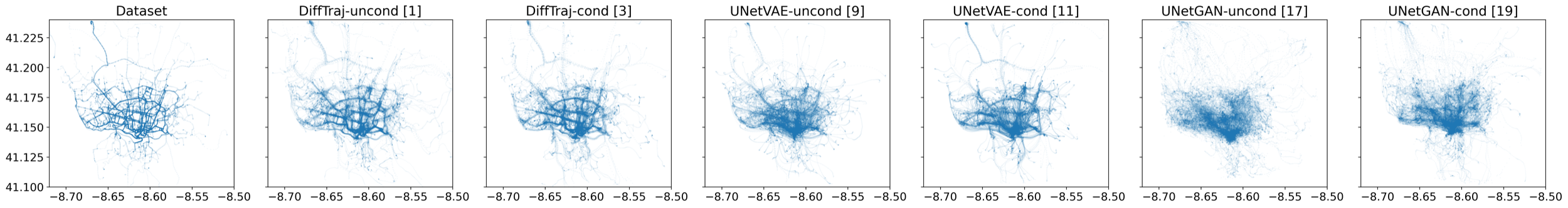}
    }

    \subfloat[
        All these models provide privacy guarantees \wrt the conditional information (\ballnumber{C}).
    \label{fig_examples_dpcond_ec10_porto}
    ]{
        \includegraphics[width=0.98\textwidth]{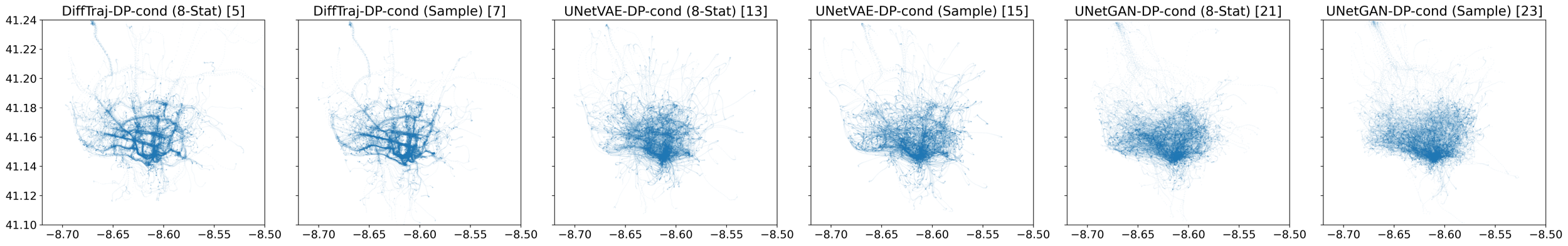}
    }

    \subfloat[
        All these models provide full formal guarantees (\ballnumber{D}).
    \label{fig_examples_dp_ec10_porto}
    ]{
        \includegraphics[width=0.98\textwidth]{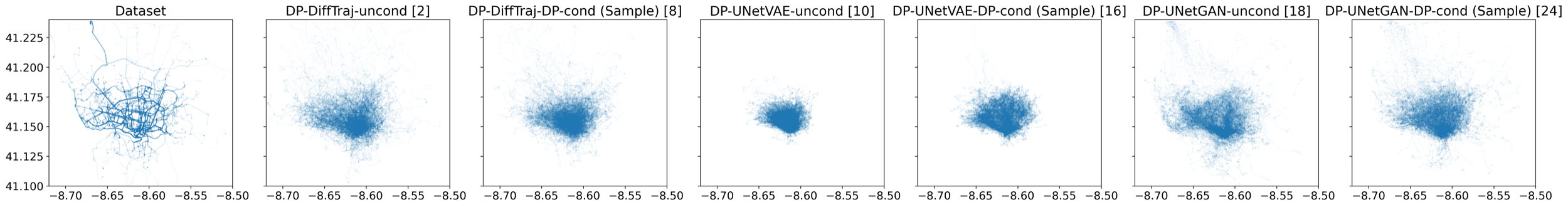}
    }

    \caption{
        Example synthetic trajectories for Porto after $\approx\num{100000}$ steps.
        The case IDs (\refer \tabref{tab_all_cases}) are reported in brackets.
    }
    \label{fig_combined_examples_porto}
    \vspace{-1em}
\end{figure*}

This section presents the evaluation results.
Figures~\ref{fig_swd_porto}--\ref{fig_geolife_swd_by_tm} show the \gls{swd} and \gls{jsd} for Porto and GeoLife, respectively.
Error bars indicate \SI{95}{\percent} confidence intervals, threat models (\refer \secref{sec_threat_model}) are shown in the legends, and PrivTrace is abbreviated as \texttt{PT}.
Confidence intervals are wide due to the limited number of runs (five) per configuration.
Additional runs were infeasible given the significant computational cost.

For our DP-\condInfo mechanism, we only show the \Sample format with Laplace noise to simplify the plots, as the \DTcond format yields similar results.
\figref{fig_combined_examples_porto} shows the point distribution for \num{1000} synthetic Porto trajectories.
\ifgeolife
  GeoLife examples are in the appendix (\figref{fig_combined_examples_geolife}).
\fi
We focus our discussion on \gls{swd} and \gls{jsd}, as these best reflect the observed distribution quality.
Due to the large number of configurations and metrics, the remaining results are reported in the appendix in \tabref{tab_all_results_porto} and \tabref{tab_all_results_gl}, respectively.

\subsubsection{Research Question 1}\label{eval_dpsgd}
\begin{figure}[b]
    \vspace{-1.5em}
    \includegraphics[width=\columnwidth]{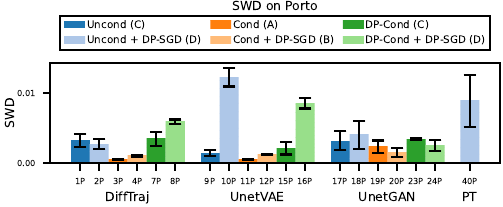}
    \vspace{-1.5em}
    \caption[SWD on Porto]{%
        The bars depict the \gls{swd} for the Porto Taxi dataset.
    }
    \ifacm\Description{}\fi
    \label{fig_swd_porto}
\end{figure}

Our first research question asked: \rqDPSGD
To answer this question, we trained each base configuration without (threat models \ballnumber{A} and \ballnumber{C}) and with \gls{dp-sgd} (\ballnumber{B} and \ballnumber{D}) and compared the results.
Figures~\ref{fig_swd_porto}, \ref{fig_swd_geolife}, \ref{fig_jsd_porto}, and \ref{fig_jsd_geolife} show non-\gls{dp-sgd} cases in darker shades and \gls{dp-sgd} cases in lighter shades of the same base colour.
Comparing adjacent bars reveals that \gls{dp-sgd} degrades performance across models except for UNetGAN, which performs similarly or even slightly better with \gls{dp-sgd}.

On Porto, the \gls{swd} of the conditional DiffTraj  (ID \texttt{3} vs \texttt{4})  %
and UNetVAE (ID \texttt{11} vs \texttt{12})  %
approximately doubles despite using conditional information without \gls{dp}, and the unconditional \gls{vae} even increases its \gls{swd} by a factor of \num{8} (ID \texttt{9} vs \texttt{10})  %
(\refer \figref{fig_swd_porto}).
The unconditional DiffTraj's \gls{swd} (ID \texttt{1} vs \texttt{2}) appears to be an outlier as it's the only DiffTraj measurement that does not show an increase when using \gls{dp-sgd}.
Moreover, this case's \gls{jsd} increases 2.6-fold when using \gls{dp-sgd}. %
\ifgeolife
This aligns with the example figures (\figref{fig_combined_examples_porto} and \figref{fig_combined_examples_geolife}) which show a substantial utility loss for all DiffTraj and UNetVAE models when using \gls{dp-sgd}.
\else
This aligns with the example figure (\figref{fig_combined_examples_porto}) which shows a substantial utility loss for all DiffTraj and UNetVAE models when using \gls{dp-sgd}.
\fi

\begin{figure}[b]
    \vspace{-1.5em}
  \includegraphics[width=\columnwidth]{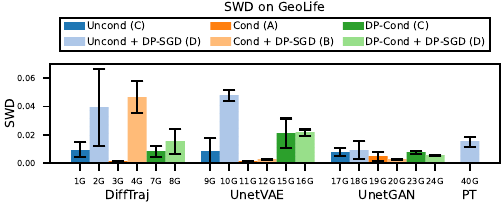}
  \vspace{-1.5em}
  \caption[SWD on GeoLife]{%
      The bars depict the \gls{swd} for the GeoLife dataset.
  }
  \ifacm\Description{}\fi
  \label{fig_swd_geolife}
\end{figure}

Interestingly, the UNetGAN performs similar (unconditional, ID \texttt{17} vs \texttt{18}) or even better (conditional, ID \texttt{19} vs \texttt{20}) with \gls{dp-sgd} than without.
This result is partly due to the poor performance of the non-\gls{dp} model and the fact that \gls{dp}-UNetGAN can update the generator more often, as it does not require increasing $n_{critic}$ to $5$ for stability.
Nevertheless, the DP-UNetGAN outperforms the other models for threat models \ballnumber{B} and \ballnumber{D} regarding most point-level utility metrics on both datasets.
This interesting finding might indicate that the best model in the standard setting (DiffTraj) does not necessarily translate to the best model under \gls{dp} (UNetGAN).

\ifgeolife
  Visually, Figures~\ref{fig_combined_examples_porto} and~\ref{fig_combined_examples_geolife} show
\else
  Visually, \figref{fig_combined_examples_porto} shows
\fi
that synthetic data generated with \gls{dp-sgd} retains the overall shape of the dataset but does not fully capture finer details, such as the road network.
The \gls{dp} version of unconditional DiffTraj (ID \texttt{2}) and both DP-UNetGAN variants (ID \texttt{18} and \texttt{24}) approximate the general distribution of the Porto dataset.
Hence, these models may still be suitable for applications that require only coarse-grained density information.

In contrast, the GeoLife results are largely unusable\xspace
\ifgeolife(\refer \figref{fig_examples_dp_ec10_geolife})\fi.
This can be attributed to the Porto dataset being over \SI{20}{\times} larger than GeoLife, as larger datasets generally yield better results under \gls{dp-sgd}~\cite{dpfyML}.
The bar plots for GeoLife (Figures~\ref{fig_swd_geolife} and \ref{fig_jsd_geolife}) also show much larger confidence intervals than those for Porto (Figures~\ref{fig_swd_porto} and~\ref{fig_jsd_porto}), indicating less stable training and greater variability between runs.

\update{
  Finally, while this paper focuses on deep learning models, we include PrivTrace as a \gls{sota} non-deep learning baseline for comparison.
  When comparing the results to PrivTrace (ID \texttt{40}, labelled \texttt{PT} in the figures), most \gls{dp-sgd} models outperform the baseline on Porto.
  On GeoLife, results are closer, but UNetGAN still surpasses PrivTrace.
  These findings show that deep learning models, despite being more recent, can achieve comparable utility to \gls{sota} non-\gls{dl} methods, underscoring their potential for trajectory generation.
  Still, future work is needed: in TM \ballnumber{D}, neither the baseline nor the models provide sufficient utility for most downstream applications, as evident from the example
  \ifgeolife
    figures~\ref{fig_combined_examples_porto} and~\ref{fig_combined_examples_geolife}.
  \else
    figure~\ref{fig_combined_examples_porto}.
  \fi
}

\begin{figure}[t]
  \includegraphics[width=\columnwidth]{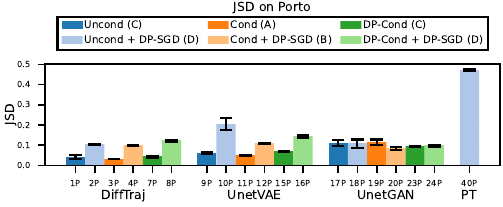}
  \vspace{-1.5em}
  \caption[JSD on Porto]{%
      The bars depict the \gls{jsd} for the Porto Taxi dataset.
  }
  \ifacm\Description{}\fi
  \label{fig_jsd_porto}
  \vspace{-1.5em}
\end{figure}

\begin{figure}[b]
  \vspace{-1.5em}
  \includegraphics[width=\columnwidth]{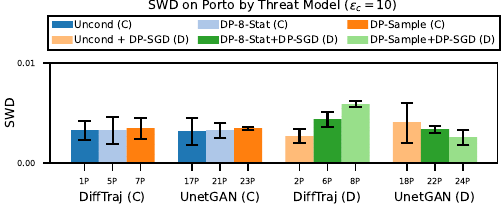}
  \vspace{-1.5em}
  \caption[JSD on Porto]{%
      The bars depict the \gls{swd} ordered by threat model.
  }
  \ifacm\Description{}\fi
  \label{fig_porto_swd_by_tm}
\end{figure}

\subheading{Conclusion}
To answer \ref{rq_dpsgd}, our results show that \gls{dp-sgd} incurs substantial utility loss in trajectory generation, restricting most applications.
Nonetheless, performance remains comparable to non-\gls{dl} baselines such as PrivTrace, underscoring the potential of deep learning models.
Providing high utility using \gls{dp-sgd} (Threat models \ballnumber{B} and \ballnumber{D}) is not feasible with the current state-of-the-art models and privacy mechanisms and requires further research.

\subsubsection{Research Question 2}\label{eval_conditional}

Second, we asked: \rqDPCond
To answer this question, we proposed a method to provide formal guarantees \wrt the conditional information in \secref{sec_conditional_info}.
Here, we compare unconditional and conditional models using the \gls{dp} \condInfo mechanism for DiffTraj and UNetGAN on Porto (Figure~\ref{fig_porto_swd_by_tm}) and GeoLife (Figure~\ref{fig_geolife_swd_by_tm}), respectively.
The left groups represent threat model \ballnumber{C} (without \gls{dp-sgd}), and the right groups represent threat model \ballnumber{D} (with \gls{dp-sgd}).

Without \gls{dp-sgd} (TM \ballnumber{C}), all three configurations of both model types perform similarly, revealing no benefit of using the \gls{dp} \condInfo mechanism over an unconditional model.
However, when applying \gls{dp-sgd} (TM \ballnumber{D}), the results become more interesting.
For DiffTraj on Porto, the utility even degrades when using the \gls{dp} \condInfo mechanism with either input format.
However, for DiffTraj on GeoLife and for UNetGAN on both datasets, the \gls{dp} \condInfo mechanism improves utility compared to the unconditional model.
In these cases, the unconditional model performs worst, the \DTcond format yields intermediate results, and the \Sample format performs best.

For example, on Porto the UNetGAN using \gls{dp} \condInfo with \Sample format (ID \texttt{24}) represents the best model for threat model \ballnumber{D} regarding all point-level metrics, and most on GeoLife.
On GeoLife, the UNetGAN model exhibits noticeably improved point distributions when incorporating the \gls{dp} \condInfo mechanism, both without (IDs \texttt{23} vs \texttt{17}) and with \gls{dp-sgd} (IDs \texttt{24} vs \texttt{18}).

\begin{figure}[t]
  \includegraphics[width=\columnwidth]{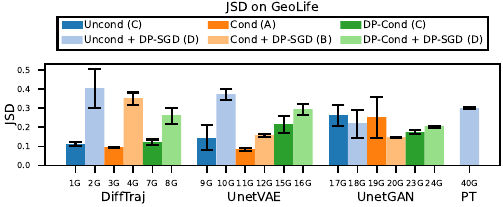}
  \vspace{-1.5em}
  \caption[JSD on GeoLife]{%
      The bars depict the \gls{jsd} for the GeoLife dataset.
  }
  \ifacm\Description{}\fi
  \label{fig_jsd_geolife}
  \vspace{-1.5em}
\end{figure}

\begin{figure}[b]
  \vspace{-1.5em}
  \includegraphics[width=\columnwidth]{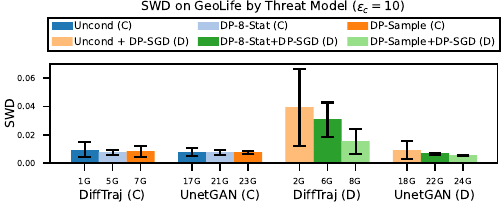}
  \vspace{-1.5em}
  \caption[JSD on Geolife]{%
      The bars depict the \gls{swd} ordered by threat model.
  }
  \ifacm\Description{}\fi
  \label{fig_geolife_swd_by_tm}
\end{figure}

These results suggest that the \gls{dp} \condInfo mechanism primarily enhances model stability.
This is supported by two observations.
First, UNetGAN benefits the most, consistent with \glspl{gan} being notoriously unstable~\cite{wgan}.
Second, the effect is stronger on the smaller GeoLife dataset, as smaller datasets generally lead to more unstable training.

Moreover, we note that the \gls{dp} \condInfo mechanism becomes beneficial for TM \ballnumber{C} when relaxing the privacy parameter to $\varepsilon_c = 20$.
For example, without \gls{dp-sgd}, DiffTraj with \Sample DP \condInfo (ID \texttt{7a}) reduces the \gls{swd} by \SI{12}{\percent} on Porto  %
and \SI{30}{\percent} on GeoLife  %
compared to the unconditional model (ID \texttt{1}).
While $\varepsilon_c = 20$ may seem weak, we highlight that the \gls{dp} \condInfo mechanism is analysed \wrt the replace-one adjacency relation, which is considered approximately twice as strong as the common add-or-remove relation (\refer \secref{sec_privacy_combined}).
Additionally, the $\varepsilon_c$ values are computed under the worst-case assumption $m = n$, although we actually used $m = \num{3000} \ll n = D_{test}$ during evaluation.
Accordingly, the true privacy level is higher than reported, and for sufficiently large datasets, the mechanism can yield benefits even under stricter guarantees.

\subheading{Conclusion}
To answer \ref{rq_dpcond}, our results show that the \gls{dp} \condInfo mechanism generally improves utility when combined with \gls{dp-sgd} (TM \ballnumber{D}), especially for UNetGAN and on the smaller GeoLife dataset.
Without \gls{dp-sgd} (TM \ballnumber{C}), it offers no clear benefit under stricter privacy settings but can yield improvements when relaxing $\varepsilon_c$.
Overall, the mechanism enhances model stability and can improve utility, particularly for unstable models and smaller datasets.

\subsubsection{Research Question 3}\label{eval_architecture}

Finally, we asked: \rqArchitecture
Without formal guarantees (TM \ballnumber{A}), DiffTraj with conditional information (ID \texttt{3}) yields the best metrics and visual results for Porto, confirming that diffusion models represent the current \gls{sota} for trajectory generation.
However, the conditional UNetVAE (ID \texttt{11}) achieves similar results, even outperforming DiffTraj on several, mostly sequential, metrics.
On Porto, the synthetic data produced by UNetVAE appears more blurry than that of DiffTraj (\refer \figref{fig_examples_nodp_porto}).
On GeoLife, the conditional UNetVAE (ID \texttt{11}) outperforms DiffTraj (ID \texttt{3}) on all but two metrics.
This aligns with the general observation that diffusion models require larger datasets for optimal performance compared to \gls{vae} models (Porto contains \SI{20}{\times} more data than GeoLife).
The UNetGAN (IDs \texttt{17} and \texttt{19}) performs significantly worse than the other models for TM \ballnumber{A}, both visually and based on the \gls{swd} metric (\refer \figref{fig_swd_porto}).
The poorer performance may partly result from the reduced number of generator updates (\refer \secref{sec_architecture}).
However, increasing the updates by a factor of \SI{5} (\num{500000} total), to match the other models, did not yield comparable results.

Furthermore, the results highlight the critical role of conditional information in improving the utility of synthetic data across all models.
Visually, Figure~\ref{fig_examples_nodp_porto} shows a clear difference between conditional and unconditional models.
This is supported by the \gls{swd} values on Porto, which decrease by \SI{84}{\percent} for DiffTraj (ID \texttt{1} vs \texttt{3}),  %
\SI{64}{\percent} for UNetVAE (ID \texttt{9} vs \texttt{11}),  %
and \SI{27}{\percent} for UNetGAN (ID \texttt{17} vs \texttt{19})  %
when conditional information is used.
This trend is also evident in the corresponding bar plots \figref{fig_swd_porto} and \figref{fig_swd_geolife}.

We already discussed the different models in the context of \gls{dp-sgd} in \secref{eval_dpsgd}.
UNetGAN achieves the best performance under \gls{dp-sgd} on both datasets, followed by the diffusion model, while UNetVAE performs consistently poorly in all cases involving \gls{dp-sgd}.
These findings indicate that the model with the best non-\gls{dp} performance is not necessarily optimal when privacy guarantees are required.
The reasons for this behaviour remain unclear.
The \gls{gan} may benefit from \gls{dp-sgd} being applied only to the generator, which has roughly half the parameters of the other models (\refer \secref{sec_architecture}).
This results in less relative noise per gradient update, which may explain the improved performance.
\begin{figure}[t]
  \includegraphics[width=\columnwidth]{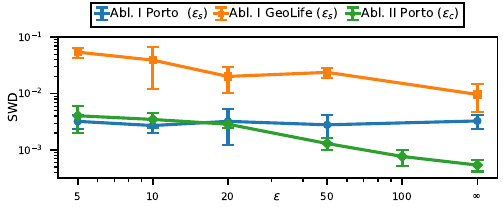}
  \vspace{-1.5em}
  \caption{
      Ablation studies I and II with \SI{95}{\percent} conf.\ Intervals over log-scale.
      $\infty$ indicates configurations without \gls{dp}.
  }
  \ifacm\Description{}\fi
  \label{fig_ablation_eps}
  \vspace{-1em}
\end{figure}

\subheading{Conclusion}
Without \gls{dp-sgd}, the diffusion model DiffTraj achieves the best performance, followed closely by the UNetVAE.
However, when applying \gls{dp-sgd}, the UNetGAN outperforms both other models, indicating that the best model in the non-\gls{dp} setting is not necessarily optimal under \gls{dp}.

\subsubsection{Ablation Study I: Influence of \texorpdfstring{$\varepsilon_s$}{epsilon-s}}\label{eval_eps_s}
In our first ablation study (\figref{fig_ablation_eps}), we varied the privacy parameter $\varepsilon_s \in \left\{5, 10, 20, 50\right\}$ of \gls{dp-sgd} for the unconditional DiffTraj.
On GeoLife, we observe the expected trend: increasing $\varepsilon_s$ improves utility, as indicated by the decreasing \gls{swd} (orange curve in \figref{fig_ablation_eps}).
Even at $\varepsilon_s = 50$, the model performs significantly worse than without \gls{dp-sgd}, with an \gls{swd} that is \SI{2.5}{\times} higher than the non-\gls{dp} model (ID \texttt{1} vs \texttt{27}).  %
On Porto (blue curve), we do not observe a clear correlation between $\varepsilon_s$ and \gls{swd}.
However, other metrics such as \gls{jsd} confirm the expected trend, with the largest \gls{jsd} of \num{0.108} for $\varepsilon_s = 5$ (ID \texttt{25}) and the smallest \gls{jsd} of \num{0.0409} for the non-\gls{dp} model (ID \texttt{1}).
Hence, the \gls{swd} results on Porto appear to be an outlier.

\subsubsection{Ablation Study II: Influence of \texorpdfstring{$\varepsilon_c$}{epsilon-c}}\label{eval_eps_c}
Similarly, we varied the privacy parameter $\varepsilon_c \in \left\{5, 10, 20, 50, 100\right\}$ for the conditional DiffTraj model with \gls{dp} \condInfo of \Sample format (green curve in \figref{fig_ablation_eps}).
The figure shows a clear correlation between increasing $\varepsilon_c$ and decreasing \gls{swd}.
While $\varepsilon_c = 50$ (ID \texttt{30}) yields a \SI{140}{\percent} increase in \gls{swd} compared to the non-\gls{dp} model (ID \texttt{3}), $\varepsilon_c = 100$ (ID \texttt{31}) reduces the gap to \SI{42}{\percent}.  %

\begin{figure}[t]
  \includegraphics[width=\columnwidth]{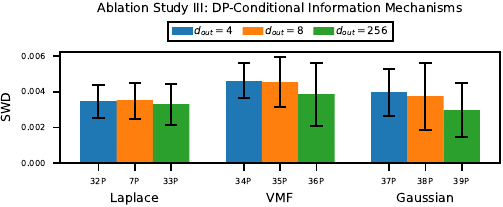}
  \vspace{-1.5em}
  \caption[Ablation Study III]{%
      Ablation Study III.
      Each group represents one of the three \gls{dp} \condInfo mechanisms, and the colours indicate the $d_{out}$ value.
  }
  \ifacm\Description{}\fi
  \label{fig_ablation_iii}
  \vspace{-1em}
\end{figure}

\subsubsection{Ablation Study III: DP Mechanisms}\label{eval_mechanisms}
As the third ablation study, we compared the three \gls{dp} \condInfo mechanisms -- Laplace, Gaussian, and \gls{vmf}-- introduced in \secref{sec_conditional_info}, using $d_{out} \in \left\{4, 8, 256\right\}$ and $\varepsilon_c = 10$.
\figref{fig_ablation_iii} depicts the \gls{swd} results for the Porto dataset.
For fixed $\varepsilon_c$, all mechanisms perform similarly.
The only configuration that clearly outperforms the default Laplace mechanism with $d_{out} = 8$ (ID \texttt{7}) is the Gaussian mechanism with $d_{out} = 256$ (ID \texttt{39}).
However, this mechanism performs better only regarding point-based metrics, while degrading most sequential metrics.
For $d_{out} \in \left\{4, 8\right\}$, the Laplace mechanism remains the best choice, followed by the Gaussian mechanism.
Despite the dependency on the $l_1$-norm, which suggested worse performance for larger $d_{out}$, the Laplace mechanism performs consistently across all $d_{out}$ values.
In contrast, the Gaussian mechanism performs well only at $d_{out} = 256$.
The \gls{vmf} mechanism yields the poorest results among the three mechanisms.
While the mechanism improves with larger $d_{out}$, this comes at the cost of significantly increased runtime (\refer \secref{eval_runtimes}).
In conclusion, choosing between the Laplace mechanism at small $d_{out}$ and the Gaussian mechanism at large $d_{out}$ appears most promising.
The \gls{vmf} mechanism does not appear suitable for this task.

\subsubsection{Runtimes}\label{eval_runtimes}
All runtimes are based on the hardware specified in \secref{sec_hardware} and reported in \tabref{tab_all_cases} for Porto.
Geolife runtimes are slightly higher due to the increased trajectory length.
DiffTraj trains the longest, e.g., $\approx\SI{6}{\hour}$ for the conditional baseline, while the \gls{vae} and \gls{gan} require $\approx\SI{2}{\hour}$ and $\approx\SI{5}{\hour}$, respectively.
Matching the number of generator updates in UNetGAN to those of the other models increases the training time to $\approx\SI{25}{\hour}$, which justifies our choice to align total training steps instead. 
Unconditional models generally train significantly faster than their conditional counterparts, with minimal differences for VAE.

Using \gls{dp-sgd} increases runtimes by a factor of 5--10.
For instance, the conditional DiffTraj requires $\approx\SI{54}{\hour}$ (ID \texttt{4}), VAE $\approx\SI{29}{\hour}$, and GAN $\approx\SI{32}{\hour}$.
The smaller relative increase for UNetGAN is due to only the generator being trained with \gls{dp-sgd} such that the discriminator, thereby half of the total parameters, is updated without \gls{dp-sgd}.

The \gls{dp} \condInfo mechanism has negligible runtime impact.
Only the \gls{vmf} mechanism incurs significant overhead for large $d_{out}$, but this configuration is impractical due to poor performance (\refer \secref{eval_mechanisms}).
Notably, PrivTrace runtimes are significantly higher than reported in~\cite{PrivTrace2023} due to hard-coded parameters $K$ and $\kappa$ (\refer \secref{sec_setup}), which result in a larger amount of Markov states.

For all cases, evaluation took approximately \SI{9}{\min}, with most of the time spent on metric computation.
Generating \num{3000} samples took $<\SI{1}{\second}$ for the \gls{vae} and \gls{gan}, and $\approx\SI{20}{\second}$ for the diffusion model.
Generation is unaffected by \gls{dp-sgd}, as it applies only during training.

\section{Discussion}\label{sec_discussion}

Some authors suggest that the inherent obfuscation of the generative process alone suffices to protect privacy~\cite{DiffTraj23}.
However, this method does not provide formal privacy guarantees~\cite{pets24_paper}.
\gls{mia}~\cite{Shokri2017} can recover training data from a model, and an overfitted model may leak private information.
We acknowledge that recovering private information from synthetic data is challenging, and such obfuscation may suffice in low-risk scenarios.
However, using \condInfo during generation introduces a more significant risk.
For example, \citet{pets24_paper} showed that LSTM-TrajGAN can reproduce trajectories nearly identical to its inputs.

Guarantees \wrt generation inputs $D_{test}$ (TM \ballnumber{C}) are attainable today with moderate utility loss, e.g., using unconditional DiffTraj (ID \texttt{1}) or DiffTraj with \gls{dp} \condInfo (ID \texttt{5} or \texttt{7}).
In contrast, protecting both training and test data (TM \ballnumber{D}) leads to severe utility loss, rendering most models unsuitable for real-world applications, as clearly visible in Figure~\ref{fig_combined_examples_porto}.
Still, the utility for this threat model \ballnumber{D} remains comparable to \gls{sota} non-\gls{dl} baselines such as PrivTrace.
This highlights the potential of deep learning models, and future work on differentially private deep learning may render them viable for privacy-preserving data generation in high-risk settings in the future.

Meanwhile, a hybrid approach can be considered for models with generation-time inputs, such as conditional models.
Here, \gls{dp-sgd} is omitted, i.e., no guarantees are provided \wrt $D_{train}$, but the \condInfo ($D_{test}$) is protected via mechanisms such as the proposed \gls{dp} \condInfo mechanism or utilising unconditional models.
Although lacking formal guarantees towards the full dataset, this method can constitute a reasonable compromise in specific contexts, especially when combined with diffusion models, which show reduced susceptibility to \glspl{mia} (but remain vulnerable~\cite{Matsumoto2023}).
Mitigating direct links between generation inputs and outputs can reduce risks and direct leakage, such as those demonstrated in~\cite{pets24_paper}.

\subheading{Limitations}\label{sec_limitations}
Despite careful hyperparameter selection and preliminary testing, the high computational cost limited the number of configurations we could evaluate.
Therefore, we cannot guarantee that the optimal configuration or each model was found but are confident that the tested configurations offer a representative performance overview and reveal valid trends.
Although the final utility depends on factors such as available resources, dataset properties, and the use case, the results serve as a solid starting point for model selection.

\section{Conclusion}\label{sec_conclusion}

In this work, we set out to answer the question: \textit{\mainRQ}?
To this end, we evaluated the utility of synthetic trajectory data using three model types -- \gls{vae}, \gls{gan}, and diffusion models -- with and without \gls{dp-sgd} and conditional information.
The evaluation covered two real-world datasets and eleven metrics.
In total, we analysed over \num{90} configurations to examine the effect of various parameters on the utility-privacy trade-off.

The results revealed several key findings.
First, \gls{dp-sgd} significantly reduces the utility of generated trajectories.
While acceptable utility may still be achievable for specific use cases with sufficiently large datasets, full formal privacy guarantees remain restrictive for most practical applications.
Nonetheless, comparison to a \gls{sota} non-\gls{dl} baseline, PrivTrace, shows that deep learning models can achieve comparable or better utility under \gls{dp}, underscoring their potential for trajectory generation.
Second, we proposed a novel \gls{dp} mechanism to enable the use of conditional information with formal \gls{dp} guarantees \wrt the real data and formally proved its guarantees.
Although the \gls{dp} \condInfo mechanism does not consistently outperform unconditional models, it improves training stability especially combined with \gls{dp-sgd}, for unstable models, and on smaller datasets.
Third, the comparison of model types confirmed that, without formal privacy guarantees, diffusion models generally offer the best performance.
However, the evaluated \gls{vae} performed strongly on smaller datasets.
Under \gls{dp-sgd}, the \gls{gan} exhibited superior performance, demonstrating that the best model without privacy does not necessarily remain optimal when guarantees are applied.
All source code has been made available to support future research.

\ifanonymous
\else
  \ifacm
    \begin{acks}
  \else
    \section*{Acknowledgements}
  \fi
      The authors would like to thank UNSW, the Commonwealth of Australia,
      and the Cybersecurity Cooperative Research
      Centre Limited for their support of this work.
  \ifacm
    \end{acks}
  \fi
\fi

\bibliographystyle{IEEEtranNDOIandURLwithDate}
\bibliography{library}
\appendices

\section{All Results}

\ifgeolife
  In the following we provide detailed information on the used configurations (\tabref{tab_all_cases}), the results of the evaluation (\tabref{tab_all_results_porto} and \tabref{tab_all_results_gl}), and examples of synthetic trajectories for the second dataset GeoLife (\figref{fig_combined_examples_geolife}).
\else
  In the following we provide detailed information on the used configurations (\tabref{tab_all_cases}), all evaluation results on Porto (\tabref{tab_all_results_porto}), and on GeoLife (\tabref{tab_all_results_gl}).
\fi

{
  \vbadness=10000 %
  \printglossary[type=symbols, title=Symbols]
  \printglossary[title=Abbreviations]
}

\begin{table*}
  \centering
  \caption{
    Overview of all evaluation configurations.
    The table lists the variable parameters of all test cases.
    Refer to \secref{sec_configurations} for an explanation of these values.
    The results for all cases are listed in \tabref{tab_all_results_porto} and \tabref{tab_all_results_gl}.
    The mean runtimes are reported for the Porto dataset.
    Runtimes on the GeoLife dataset are slightly higher due to the increased trajectory length.
    }
  \begin{adjustbox}{max width=\linewidth, max totalheight=0.905\textheight, keepaspectratio}
  \begin{tabular}{cccccccccc}
    \toprule
    \textbf{ID} & \textbf{Model} & \textbf{DP-SGD} & \textbf{$\varepsilon_s$} & \textbf{Cond. Info.} & \textbf{Cond. DP.} & \textbf{$\varepsilon_c$} & \textbf{$d_{out}$} & \textbf{\glsshort{TM}} & \textbf{Runtime} \\
    \midrule

    \multicolumn{10}{c}{ \textbf{ DiffTraj } } \\ \midrule
    1 & DiffTraj & \xmark & -- & None & -- & -- & -- & \ballnumber{C} & 03:04:58\unit{\hour} \\
    2 & DiffTraj & $\checkmark$ & 10.0 & None & -- & -- & -- & \ballnumber{D} & 27:59:13\unit{\hour} \\
    3 & DiffTraj & \xmark & -- & \DTcond & None & -- & -- & \ballnumber{A} & 05:55:18\unit{\hour} \\
    4 & DiffTraj & $\checkmark$ & 10.0 & \DTcond & None & -- & -- & \ballnumber{B} & 53:25:46\unit{\hour} \\
    5\textit{\textcolor{RoyalBlue}{(a)}} & DiffTraj & \xmark & -- & \DTcond & \gls{lap} & 10.0 / \textit{\textcolor{RoyalBlue}{20.0}} & 8 & \ballnumber{C} & 05:58:09\unit{\hour} \\
    6\textit{\textcolor{RoyalBlue}{(a)}} & DiffTraj & $\checkmark$ & 10.0 & \DTcond & \gls{lap} & 10.0 / \textit{\textcolor{RoyalBlue}{20.0}} & 8 & \ballnumber{D} & 53:26:53\unit{\hour} \\
    7\textit{\textcolor{RoyalBlue}{(a)}} & DiffTraj & \xmark & -- & \Sample & \gls{lap} & 10.0 / \textit{\textcolor{RoyalBlue}{20.0}} & 8 & \ballnumber{C} & 05:54:40\unit{\hour} \\
    8\textit{\textcolor{RoyalBlue}{(a)}} & DiffTraj & $\checkmark$ & 10.0 & \Sample & \gls{lap} & 10.0 / \textit{\textcolor{RoyalBlue}{20.0}} & 8 & \ballnumber{D} & 52:09:51\unit{\hour} \\

    \multicolumn{10}{c}{ \textbf{ UNetVAE } } \\ \midrule
    9 & UNetVAE & \xmark & -- & None & -- & -- & -- & \ballnumber{C} & 02:08:03\unit{\hour} \\
    10 & UNetVAE & $\checkmark$ & 10.0 & None & -- & -- & -- & \ballnumber{D} & 28:38:35\unit{\hour} \\
    11 & UNetVAE & \xmark & -- & \DTcond & None & -- & -- & \ballnumber{A} & 02:17:50\unit{\hour} \\
    12 & UNetVAE & $\checkmark$ & 10.0 & \DTcond & None & -- & -- & \ballnumber{B} & 33:01:06\unit{\hour} \\
    13\textit{\textcolor{RoyalBlue}{(a)}} & UNetVAE & \xmark & -- & \DTcond & \gls{lap} & 10.0 / \textit{\textcolor{RoyalBlue}{20.0}} & 8 & \ballnumber{C} & 02:20:26\unit{\hour} \\
    14\textit{\textcolor{RoyalBlue}{(a)}} & UNetVAE & $\checkmark$ & 10.0 & \DTcond & \gls{lap} & 10.0 / \textit{\textcolor{RoyalBlue}{20.0}} & 8 & \ballnumber{D} & 33:40:07\unit{\hour} \\
    15\textit{\textcolor{RoyalBlue}{(a)}} & UNetVAE & \xmark & -- & \Sample & \gls{lap} & 10.0 / \textit{\textcolor{RoyalBlue}{20.0}} & 8 & \ballnumber{C} & 02:18:22\unit{\hour} \\
    16\textit{\textcolor{RoyalBlue}{(a)}} & UNetVAE & $\checkmark$ & 10.0 & \Sample & \gls{lap} & 10.0 / \textit{\textcolor{RoyalBlue}{20.0}} & 8 & \ballnumber{D} & 31:49:42\unit{\hour} \\

    \multicolumn{10}{c}{ \textbf{ UNetGAN } } \\ \midrule
    17 & UNetGAN & \xmark & -- & None & -- & -- & -- & \ballnumber{C} & 04:41:02\unit{\hour} \\
    18 & UNetGAN & $\checkmark$ & 10.0 & None & -- & -- & -- & \ballnumber{D} & 28:40:34\unit{\hour} \\
    19 & UNetGAN & \xmark & -- & \DTcond & None & -- & -- & \ballnumber{A} & 05:03:54\unit{\hour} \\
    20 & UNetGAN & $\checkmark$ & 10.0 & \DTcond & None & -- & -- & \ballnumber{B} & 31:28:14\unit{\hour} \\
    21\textit{\textcolor{RoyalBlue}{(a)}} & UNetGAN & \xmark & -- & \DTcond & \gls{lap} & 10.0 / \textit{\textcolor{RoyalBlue}{20.0}} & 8 & \ballnumber{C} & 05:06:41\unit{\hour} \\
    22\textit{\textcolor{RoyalBlue}{(a)}} & UNetGAN & $\checkmark$ & 10.0 & \DTcond & \gls{lap} & 10.0 / \textit{\textcolor{RoyalBlue}{20.0}} & 8 & \ballnumber{D} & 31:43:34\unit{\hour} \\
    23\textit{\textcolor{RoyalBlue}{(a)}} & UNetGAN & \xmark & -- & \Sample & \gls{lap} & 10.0 / \textit{\textcolor{RoyalBlue}{20.0}} & 8 & \ballnumber{C} & 05:01:23\unit{\hour} \\
    24\textit{\textcolor{RoyalBlue}{(a)}} & UNetGAN & $\checkmark$ & 10.0 & \Sample & \gls{lap} & 10.0 / \textit{\textcolor{RoyalBlue}{20.0}} & 8 & \ballnumber{D} & 30:53:21\unit{\hour} \\

    \multicolumn{10}{c}{ \textbf{ Ablation Study I: Varying $\varepsilon_s$ } } \\ \midrule
    25 & DiffTraj & $\checkmark$ & 5.0 & None & -- & -- & -- & \ballnumber{D} & 27:54:51\unit{\hour} \\
    2 & DiffTraj & $\checkmark$ & 10.0 & None & -- & -- & -- & \ballnumber{D} & 27:59:13\unit{\hour} \\
    26 & DiffTraj & $\checkmark$ & 20.0 & None & -- & -- & -- & \ballnumber{D} & 28:11:25\unit{\hour} \\
    27 & DiffTraj & $\checkmark$ & 50.0 & None & -- & -- & -- & \ballnumber{D} & 28:19:53\unit{\hour} \\

    \multicolumn{10}{c}{ \textbf{ Ablation Study II: Varying $\varepsilon_c$ } } \\ \midrule
    28 & DiffTraj & \xmark & -- & \Sample & \gls{lap} & 5.0 & 8 & \ballnumber{C} & 05:53:38\unit{\hour} \\
    7 & DiffTraj & \xmark & -- & \Sample & \gls{lap} & 10.0 & 8 & \ballnumber{C} & 05:54:40\unit{\hour} \\
    29 & DiffTraj & \xmark & -- & \Sample & \gls{lap} & 20.0 & 8 & \ballnumber{C} & 05:54:31\unit{\hour} \\
    30 & DiffTraj & \xmark & -- & \Sample & \gls{lap} & 50.0 & 8 & \ballnumber{C} & 05:52:39\unit{\hour} \\
    31 & DiffTraj & \xmark & -- & \Sample & \gls{lap} & 100.0 & 8 & \ballnumber{C} & 05:54:18\unit{\hour} \\

    \multicolumn{10}{c}{ \textbf{ Ablation Study III: Varying \gls{dp}-Conditional Information Mechanisms } } \\ \midrule
    32 & DiffTraj & \xmark & -- & \Sample & \gls{lap} & 10.0 & 4 & \ballnumber{C} & 05:53:47\unit{\hour} \\
    7 & DiffTraj & \xmark & -- & \Sample & \gls{lap} & 10.0 & 8 & \ballnumber{C} & 05:54:40\unit{\hour} \\
    33 & DiffTraj & \xmark & -- & \Sample & \gls{lap} & 10.0 & 256 & \ballnumber{C} & 06:01:03\unit{\hour} \\
    34 & DiffTraj & \xmark & -- & \Sample & \gls{V} & 10.0 & 4 & \ballnumber{C} & 06:08:50\unit{\hour} \\
    35 & DiffTraj & \xmark & -- & \Sample & \gls{V} & 10.0 & 8 & \ballnumber{C} & 06:07:07\unit{\hour} \\
    36 & DiffTraj & \xmark & -- & \Sample & \gls{V} & 10.0 & 256 & \ballnumber{C} & 13:36:42\unit{\hour} \\
    37 & DiffTraj & \xmark & -- & \Sample & \gls{gau} & 10.0 & 4 & \ballnumber{C} & 05:54:44\unit{\hour} \\
    38 & DiffTraj & \xmark & -- & \Sample & \gls{gau} & 10.0 & 8 & \ballnumber{C} & 05:55:41\unit{\hour} \\
    39 & DiffTraj & \xmark & -- & \Sample & \gls{gau} & 10.0 & 256 & \ballnumber{C} & 05:57:36\unit{\hour} \\

    \multicolumn{10}{c}{ \textbf{ Non-Deep Learning Baseline: PrivTrace } } \\ \midrule
    40 & PrivTrace & -- & 10.0 & -- & -- & -- & -- & \ballnumber{D} & 93:36:17\unit{\hour} \\
    \bottomrule
\end{tabular}

  \end{adjustbox}
\label{tab_all_cases}
\end{table*}

\begin{table*}
  \centering
  \caption{
      This table lists the metric results for all Porto cases.
      Refer to \tabref{tab_all_cases} for the configuration of each case.
      The best baseline case (1--24) for each threat model is highlighted in a specific colour.
      The \textcolor{RoyalBlue}{(a)} cases use $\varepsilon_c = 20$ and are not considered for the computation of these best values.
      However, if one of the \textcolor{RoyalBlue}{(a)} cases beats the respective best case the value is bolded.
      }
  \begin{adjustbox}{max width=\linewidth, max totalheight=0.905\textheight, keepaspectratio}
  \begin{tabular}{ccccccccccccc}
    \toprule
    \textbf{ID} & \textbf{\glsshort{TM}} & \textbf{JSD ($\downarrow$)} & \textbf{SWD ($\downarrow$)} & \textbf{HD (P) ($\downarrow$)} & \textbf{Range ($\downarrow$)} & \textbf{Hotspot ($\uparrow$)} & \textbf{HD (T) ($\downarrow$)} & \textbf{Haversine ($\downarrow$)} & \textbf{DTW ($\downarrow$)} & \textbf{TTD ($\downarrow$)} & \textbf{TD ($\downarrow$)} & \textbf{TE ($\downarrow$)} \\
    \midrule

    \multicolumn{13}{c}{ \textbf{ DiffTraj } } \\ \midrule
    1P & \cellcolor{printblue}\ballnumber{C} & \textcolor{printblue}{\textbf{0.0409}} & 0.00328 & \textcolor{printblue}{\textbf{2124}} & 0.111 & 0.896 & 1198 & 665 & 5.42e+04 & 0.0171 & 0.0143 & 0.216 \\
    2P & \cellcolor{printgreen}\ballnumber{D} & 0.105 & 0.00272 & 3133 & 0.151 & 0.833 & \textcolor{printgreen}{\textbf{1255}} & \textcolor{printgreen}{\textbf{713}} & \textcolor{printgreen}{\textbf{6.37e+04}} & 0.156 & 0.0349 & 0.298 \\
    3P & \cellcolor{printred}\ballnumber{A} & \textcolor{printred}{\textbf{0.0313}} & \textcolor{printred}{\textbf{0.000541}} & 1617 & \textcolor{printred}{\textbf{0.0575}} & \textcolor{printred}{\textbf{0.958}} & 867 & 741 & 5.17e+04 & 0.047 & \textcolor{printred}{\textbf{0.00461}} & 0.154 \\
    4P & \cellcolor{printorange}\ballnumber{B} & 0.097 & \textcolor{printorange}{\textbf{0.00108}} & 2320 & 0.135 & 0.865 & 1252 & 977 & 8.1e+04 & 0.223 & 0.0153 & 0.257 \\
    5P & \cellcolor{printblue}\ballnumber{C} & 0.041 & 0.00327 & 2684 & \textcolor{printblue}{\textbf{0.0992}} & 0.914 & 2745 & 2119 & 2.04e+05 & 0.0603 & 0.0237 & 0.211 \\
    6P & \cellcolor{printgreen}\ballnumber{D} & 0.116 & 0.00439 & 4101 & 0.157 & 0.825 & 2542 & 1922 & 1.85e+05 & 0.0183 & 0.0276 & 0.294 \\
    7P & \cellcolor{printblue}\ballnumber{C} & 0.0419 & 0.00349 & 2475 & 0.101 & 0.912 & 2773 & 1924 & 1.84e+05 & \textcolor{printblue}{\textbf{0.0147}} & 0.0135 & \textcolor{printblue}{\textbf{0.208}} \\
    8P & \cellcolor{printgreen}\ballnumber{D} & 0.122 & 0.00591 & 4528 & 0.162 & 0.832 & 2325 & 1644 & 1.58e+05 & \textcolor{printgreen}{\textbf{0.0168}} & 0.0285 & 0.286 \\
    \midrule

    \rowcolor{gray!10}\textit{\textcolor{RoyalBlue}{5aP}} & \cellcolor{printblue}\ballnumber{C} & \textbf{0.0382} & 0.00222 & 2401 & \textbf{0.0869} & 0.919 & 2167 & 1675 & 1.55e+05 & 0.0437 & 0.0212 & 0.211 \\
    \rowcolor{gray!10}\textit{\textcolor{RoyalBlue}{6aP}} & \cellcolor{printgreen}\ballnumber{D} & 0.111 & 0.00352 & 4174 & 0.155 & 0.831 & 2165 & 1617 & 1.52e+05 & 0.0255 & 0.0276 & 0.286 \\
    \rowcolor{gray!10}\textit{\textcolor{RoyalBlue}{7aP}} & \cellcolor{printblue}\ballnumber{C} & 0.041 & 0.00287 & 2700 & \textbf{0.0957} & 0.921 & 2052 & 1349 & 1.24e+05 & 0.0179 & 0.0144 & \textbf{0.204} \\
    \rowcolor{gray!10}\textit{\textcolor{RoyalBlue}{8aP}} & \cellcolor{printgreen}\ballnumber{D} & 0.106 & 0.00391 & 4511 & 0.15 & \textbf{0.842} & 1678 & 1132 & 1.02e+05 & 0.0236 & 0.0207 & 0.269 \\

    \multicolumn{13}{c}{ \textbf{ UNetVAE } } \\ \midrule
    9P & \cellcolor{printblue}\ballnumber{C} & 0.0611 & \textcolor{printblue}{\textbf{0.00153}} & 2395 & 0.1 & \textcolor{printblue}{\textbf{0.921}} & \textcolor{printblue}{\textbf{1029}} & \textcolor{printblue}{\textbf{552}} & \textcolor{printblue}{\textbf{4.66e+04}} & 0.0199 & 0.0157 & 0.231 \\
    10P & \cellcolor{printgreen}\ballnumber{D} & 0.205 & 0.0123 & 7458 & 0.221 & 0.825 & 2169 & 1326 & 1.25e+05 & 0.0362 & 0.0312 & 0.326 \\
    11P & \cellcolor{printred}\ballnumber{A} & 0.0498 & 0.000548 & \textcolor{printred}{\textbf{1319}} & 0.066 & 0.922 & \textcolor{printred}{\textbf{782}} & \textcolor{printred}{\textbf{656}} & \textcolor{printred}{\textbf{4.56e+04}} & \textcolor{printred}{\textbf{0.0261}} & 0.0107 & \textcolor{printred}{\textbf{0.149}} \\
    12P & \cellcolor{printorange}\ballnumber{B} & 0.108 & 0.00129 & 2048 & 0.106 & 0.842 & \textcolor{printorange}{\textbf{897}} & \textcolor{printorange}{\textbf{682}} & \textcolor{printorange}{\textbf{5.29e+04}} & 0.0408 & 0.0166 & 0.201 \\
    13P & \cellcolor{printblue}\ballnumber{C} & 0.0686 & 0.00287 & 2923 & 0.118 & 0.892 & 3584 & 2853 & 2.81e+05 & 0.0196 & 0.0128 & 0.234 \\
    14P & \cellcolor{printgreen}\ballnumber{D} & 0.15 & 0.00812 & 6922 & 0.2 & 0.805 & 2625 & 2052 & 2.01e+05 & 0.0322 & 0.028 & 0.297 \\
    15P & \cellcolor{printblue}\ballnumber{C} & 0.0685 & 0.00215 & 2802 & 0.109 & 0.894 & 3906 & 2912 & 2.87e+05 & 0.0209 & \textcolor{printblue}{\textbf{0.0125}} & 0.236 \\
    16P & \cellcolor{printgreen}\ballnumber{D} & 0.142 & 0.00858 & 7022 & 0.205 & 0.827 & 2499 & 1760 & 1.73e+05 & 0.0579 & 0.0284 & 0.315 \\
    \midrule

    \rowcolor{gray!10}\textit{\textcolor{RoyalBlue}{13aP}} & \cellcolor{printblue}\ballnumber{C} & 0.0669 & 0.00291 & 2319 & 0.113 & 0.906 & 3052 & 2350 & 2.28e+05 & 0.0197 & 0.0151 & 0.23 \\
    \rowcolor{gray!10}\textit{\textcolor{RoyalBlue}{14aP}} & \cellcolor{printgreen}\ballnumber{D} & 0.12 & 0.00541 & 6689 & 0.18 & 0.817 & 2039 & 1688 & 1.63e+05 & 0.033 & 0.0287 & 0.297 \\
    \rowcolor{gray!10}\textit{\textcolor{RoyalBlue}{15aP}} & \cellcolor{printblue}\ballnumber{C} & 0.069 & 0.0028 & 2439 & 0.115 & 0.9 & 2877 & 2035 & 1.99e+05 & 0.019 & 0.0186 & 0.236 \\
    \rowcolor{gray!10}\textit{\textcolor{RoyalBlue}{16aP}} & \cellcolor{printgreen}\ballnumber{D} & 0.107 & 0.00573 & 6387 & 0.176 & \textbf{0.851} & 2025 & 1440 & 1.4e+05 & 0.0299 & 0.0233 & 0.295 \\

    \multicolumn{13}{c}{ \textbf{ UNetGAN } } \\ \midrule
    17P & \cellcolor{printblue}\ballnumber{C} & 0.109 & 0.0032 & 2634 & 0.157 & 0.802 & 1350 & 796 & 6.97e+04 & 0.0174 & 0.0194 & 0.327 \\
    18P & \cellcolor{printgreen}\ballnumber{D} & 0.107 & 0.00404 & 2285 & 0.161 & 0.822 & 1270 & 774 & 6.69e+04 & 0.0224 & 0.0179 & 0.282 \\
    19P & \cellcolor{printred}\ballnumber{A} & 0.114 & 0.00234 & 2060 & 0.141 & 0.817 & 1110 & 892 & 6.98e+04 & 0.0286 & 0.0141 & 0.276 \\
    20P & \cellcolor{printorange}\ballnumber{B} & \textcolor{printorange}{\textbf{0.0837}} & 0.00154 & \textcolor{printorange}{\textbf{1725}} & \textcolor{printorange}{\textbf{0.101}} & \textcolor{printorange}{\textbf{0.867}} & 935 & 783 & 5.87e+04 & \textcolor{printorange}{\textbf{0.0343}} & \textcolor{printorange}{\textbf{0.0107}} & \textcolor{printorange}{\textbf{0.199}} \\
    21P & \cellcolor{printblue}\ballnumber{C} & 0.0887 & 0.00324 & 3831 & 0.132 & 0.863 & 2564 & 1873 & 1.78e+05 & 0.0243 & 0.0144 & 0.254 \\
    22P & \cellcolor{printgreen}\ballnumber{D} & 0.102 & 0.00336 & 3483 & 0.147 & 0.83 & 2862 & 2121 & 2.04e+05 & 0.0249 & \textcolor{printgreen}{\textbf{0.0154}} & 0.276 \\
    23P & \cellcolor{printblue}\ballnumber{C} & 0.0948 & 0.00345 & 4242 & 0.148 & 0.855 & 2557 & 1788 & 1.71e+05 & 0.0172 & 0.0143 & 0.252 \\
    24P & \cellcolor{printgreen}\ballnumber{D} & \textcolor{printgreen}{\textbf{0.096}} & \textcolor{printgreen}{\textbf{0.00255}} & \textcolor{printgreen}{\textbf{2005}} & \textcolor{printgreen}{\textbf{0.134}} & \textcolor{printgreen}{\textbf{0.84}} & 2701 & 1874 & 1.79e+05 & 0.0218 & 0.0185 & \textcolor{printgreen}{\textbf{0.268}} \\
    \midrule

    \rowcolor{gray!10}\textit{\textcolor{RoyalBlue}{21aP}} & \cellcolor{printblue}\ballnumber{C} & 0.0868 & 0.00296 & 4060 & 0.13 & 0.87 & 2147 & 1557 & 1.44e+05 & 0.0248 & 0.0163 & 0.263 \\
    \rowcolor{gray!10}\textit{\textcolor{RoyalBlue}{22aP}} & \cellcolor{printgreen}\ballnumber{D} & \textbf{0.0866} & \textbf{0.00179} & 2386 & \textbf{0.122} & \textbf{0.848} & 2120 & 1556 & 1.43e+05 & 0.0295 & 0.0163 & \textbf{0.253} \\
    \rowcolor{gray!10}\textit{\textcolor{RoyalBlue}{23aP}} & \cellcolor{printblue}\ballnumber{C} & 0.088 & \textbf{0.00147} & 2879 & 0.123 & 0.855 & 2025 & 1364 & 1.26e+05 & 0.0209 & 0.0176 & 0.255 \\
    \rowcolor{gray!10}\textit{\textcolor{RoyalBlue}{24aP}} & \cellcolor{printgreen}\ballnumber{D} & \textbf{0.09} & \textbf{0.00138} & 2164 & \textbf{0.127} & \textbf{0.853} & 2097 & 1410 & 1.3e+05 & 0.0221 & 0.0182 & 0.272 \\

    \multicolumn{13}{c}{ \textbf{ Ablation Study I: Varying $\varepsilon_s$ } } \\ \midrule
    25P & \cellcolor{printgreen}\ballnumber{D} & 0.108 & 0.00323 & 3279 & 0.16 & 0.832 & 1290 & 742 & 6.63e+04 & 0.173 & 0.0348 & 0.303 \\
    2P & \cellcolor{printgreen}\ballnumber{D} & 0.105 & 0.00272 & 3133 & 0.151 & 0.833 & 1255 & 713 & 6.37e+04 & 0.156 & 0.0349 & 0.298 \\
    26P & \cellcolor{printgreen}\ballnumber{D} & 0.104 & 0.00323 & 2975 & 0.157 & 0.828 & 1280 & 736 & 6.54e+04 & 0.114 & 0.028 & 0.295 \\
    27P & \cellcolor{printgreen}\ballnumber{D} & 0.1 & 0.0028 & 2785 & 0.146 & 0.83 & 1238 & 702 & 6.26e+04 & 0.0579 & 0.0256 & 0.286 \\

    \multicolumn{13}{c}{ \textbf{ Ablation Study II: Varying $\varepsilon_c$ } } \\ \midrule
    28P & \cellcolor{printblue}\ballnumber{C} & 0.0475 & 0.00406 & 2588 & 0.104 & 0.871 & 3509 & 2545 & 2.49e+05 & 0.0133 & 0.00927 & 0.217 \\
    7P & \cellcolor{printblue}\ballnumber{C} & 0.0419 & 0.00349 & 2475 & 0.101 & 0.912 & 2773 & 1924 & 1.84e+05 & 0.0147 & 0.0135 & 0.208 \\
    29P & \cellcolor{printblue}\ballnumber{C} & 0.041 & 0.00287 & 2700 & 0.0957 & 0.921 & 2052 & 1349 & 1.24e+05 & 0.0179 & 0.0144 & 0.204 \\
    30P & \cellcolor{printblue}\ballnumber{C} & 0.0357 & 0.0013 & 2544 & 0.0766 & 0.943 & 1182 & 725 & 5.87e+04 & 0.0176 & 0.0141 & 0.188 \\
    31P & \cellcolor{printblue}\ballnumber{C} & 0.0313 & 0.00077 & 2117 & 0.0649 & 0.964 & 769 & 456 & 3.33e+04 & 0.0204 & 0.0133 & 0.169 \\

    \multicolumn{13}{c}{ \textbf{ Ablation Study III: Varying \gls{dp}-Conditional Information Mechanisms } } \\ \midrule
    32P & \cellcolor{printblue}\ballnumber{C} & 0.0444 & 0.00347 & 2789 & 0.103 & 0.894 & 2788 & 1951 & 1.87e+05 & 0.149 & 0.0342 & 0.209 \\
    7P & \cellcolor{printblue}\ballnumber{C} & 0.0419 & 0.00349 & 2475 & 0.101 & 0.912 & 2773 & 1924 & 1.84e+05 & 0.0147 & 0.0135 & 0.208 \\
    33P & \cellcolor{printblue}\ballnumber{C} & 0.0437 & 0.00328 & 2814 & 0.102 & 0.89 & 2882 & 2026 & 1.95e+05 & 0.029 & 0.0143 & 0.208 \\
    34P & \cellcolor{printblue}\ballnumber{C} & 0.0505 & 0.00462 & 2492 & 0.12 & 0.888 & 2925 & 2085 & 2.02e+05 & 0.152 & 0.0342 & 0.216 \\
    35P & \cellcolor{printblue}\ballnumber{C} & 0.0515 & 0.00454 & 2638 & 0.116 & 0.89 & 3312 & 2374 & 2.32e+05 & 0.0154 & 0.0201 & 0.216 \\
    36P & \cellcolor{printblue}\ballnumber{C} & 0.0454 & 0.00384 & 2071 & 0.108 & 0.892 & 4073 & 3085 & 3.06e+05 & 0.0296 & 0.0129 & 0.212 \\
    37P & \cellcolor{printblue}\ballnumber{C} & 0.0456 & 0.00397 & 3869 & 0.109 & 0.889 & 3855 & 2884 & 2.85e+05 & 0.277 & 0.0492 & 0.21 \\
    38P & \cellcolor{printblue}\ballnumber{C} & 0.0464 & 0.00373 & 2337 & 0.108 & 0.873 & 3973 & 2956 & 2.92e+05 & 0.13 & 0.0149 & 0.209 \\
    39P & \cellcolor{printblue}\ballnumber{C} & 0.0406 & 0.00298 & 2362 & 0.0999 & 0.913 & 4100 & 3084 & 3.05e+05 & 0.0454 & 0.0126 & 0.206 \\

    \multicolumn{13}{c}{ \textbf{ Non-Deep Learning Baseline: PrivTrace } } \\ \midrule
    40P & \cellcolor{printgreen}\ballnumber{D} & 0.472 & 0.00892 & 5015 & 0.247 & 0.671 & 2562 & 1451 & 1.42e+05 & 0.25 & 0.538 & 0.534 \\
    \bottomrule
\end{tabular}

  \end{adjustbox}
  \label{tab_all_results_porto}
\end{table*}

\begin{table*}
  \centering
  \caption{
      This table lists the metric results for all GeoLife cases.
      Refer to \tabref{tab_all_cases} for the configuration of each case.
      The best baseline case (1--24) for each threat model is highlighted in a specific colour.
      The \textcolor{RoyalBlue}{(a)} cases use $\varepsilon_c = 20$ and are not considered for the computation of these best values.
      However, if one of the \textcolor{RoyalBlue}{(a)} cases beats the respective best case the value is bolded.
      }
  \begin{adjustbox}{max width=\linewidth, max totalheight=\textheight, keepaspectratio}
  \begin{tabular}{ccccccccccccc}
    \toprule
    \textbf{ID} & \textbf{\glsshort{TM}} & \textbf{JSD ($\downarrow$)} & \textbf{SWD ($\downarrow$)} & \textbf{HD (P) ($\downarrow$)} & \textbf{Range ($\downarrow$)} & \textbf{Hotspot ($\uparrow$)} & \textbf{HD (T) ($\downarrow$)} & \textbf{Haversine ($\downarrow$)} & \textbf{DTW ($\downarrow$)} & \textbf{TTD ($\downarrow$)} & \textbf{TD ($\downarrow$)} & \textbf{TE ($\downarrow$)} \\
    \midrule

    \multicolumn{13}{c}{ \textbf{ DiffTraj } } \\ \midrule
    1G & \cellcolor{printblue}\ballnumber{C} & \textcolor{printblue}{\textbf{0.113}} & 0.00967 & 5755 & \textcolor{printblue}{\textbf{0.115}} & \textcolor{printblue}{\textbf{0.826}} & 1655 & 1192 & 2.31e+05 & 0.104 & \textcolor{printblue}{\textbf{0.00788}} & \textcolor{printblue}{\textbf{0.139}} \\
    2G & \cellcolor{printgreen}\ballnumber{D} & 0.404 & 0.0392 & 8167 & 0.435 & 0.262 & 5797 & 5159 & 1.03e+06 & \textcolor{printgreen}{\textbf{0.103}} & 0.205 & 0.408 \\
    3G & \cellcolor{printred}\ballnumber{A} & 0.0924 & 0.00116 & 3376 & 0.0877 & \textcolor{printred}{\textbf{0.907}} & 871 & 777 & 1.41e+05 & 0.089 & 0.00826 & 0.0962 \\
    4G & \cellcolor{printorange}\ballnumber{B} & 0.35 & 0.0466 & 6416 & 0.345 & 0.484 & 1e+04 & 8887 & 1.78e+06 & 0.0327 & 0.202 & 0.414 \\
    5G & \cellcolor{printblue}\ballnumber{C} & 0.13 & \textcolor{printblue}{\textbf{0.00743}} & 6179 & 0.131 & 0.794 & 4775 & 4151 & 8.28e+05 & 0.172 & 0.0104 & 0.157 \\
    6G & \cellcolor{printgreen}\ballnumber{D} & 0.34 & 0.0306 & 6729 & 0.338 & 0.401 & 9170 & 8188 & 1.64e+06 & 0.299 & 0.394 & 0.434 \\
    7G & \cellcolor{printblue}\ballnumber{C} & 0.122 & 0.00807 & 6062 & 0.119 & 0.804 & 4224 & 3565 & 7.12e+05 & 0.164 & 0.0105 & 0.142 \\
    8G & \cellcolor{printgreen}\ballnumber{D} & 0.26 & 0.0152 & \textcolor{printgreen}{\textbf{5349}} & 0.24 & 0.565 & 5319 & 4348 & 8.68e+05 & 0.243 & 0.321 & 0.336 \\
    \midrule

    \rowcolor{gray!10}\textit{\textcolor{RoyalBlue}{5aG}} & \cellcolor{printblue}\ballnumber{C} & 0.143 & 0.00867 & 6412 & 0.145 & 0.789 & 3741 & 3250 & 6.47e+05 & 0.154 & 0.0109 & 0.163 \\
    \rowcolor{gray!10}\textit{\textcolor{RoyalBlue}{6aG}} & \cellcolor{printgreen}\ballnumber{D} & 0.321 & 0.026 & 6786 & 0.328 & 0.437 & 7711 & 6705 & 1.34e+06 & 0.267 & 0.394 & 0.424 \\
    \rowcolor{gray!10}\textit{\textcolor{RoyalBlue}{7aG}} & \cellcolor{printblue}\ballnumber{C} & 0.126 & \textbf{0.00677} & 6467 & \textbf{0.113} & \textbf{0.84} & 2889 & 2312 & 4.6e+05 & 0.0892 & 0.00883 & 0.139 \\
    \rowcolor{gray!10}\textit{\textcolor{RoyalBlue}{8aG}} & \cellcolor{printgreen}\ballnumber{D} & 0.237 & 0.0104 & 5379 & 0.198 & 0.594 & 3643 & 2719 & 5.41e+05 & 0.212 & 0.334 & 0.33 \\

    \multicolumn{13}{c}{ \textbf{ UNetVAE } } \\ \midrule
    9G & \cellcolor{printblue}\ballnumber{C} & 0.146 & 0.00872 & 8643 & 0.129 & 0.779 & \textcolor{printblue}{\textbf{1601}} & \textcolor{printblue}{\textbf{1147}} & \textcolor{printblue}{\textbf{2.23e+05}} & \textcolor{printblue}{\textbf{0.0374}} & 0.0203 & 0.176 \\
    10G & \cellcolor{printgreen}\ballnumber{D} & 0.372 & 0.0475 & 2.38e+04 & 0.181 & 0.564 & 5322 & 4738 & 9.47e+05 & 0.44 & 0.278 & 0.363 \\
    11G & \cellcolor{printred}\ballnumber{A} & \textcolor{printred}{\textbf{0.0833}} & \textcolor{printred}{\textbf{0.00105}} & 3802 & \textcolor{printred}{\textbf{0.0837}} & 0.903 & \textcolor{printred}{\textbf{802}} & \textcolor{printred}{\textbf{695}} & \textcolor{printred}{\textbf{1.25e+05}} & \textcolor{printred}{\textbf{0.00998}} & \textcolor{printred}{\textbf{0.00706}} & \textcolor{printred}{\textbf{0.091}} \\
    12G & \cellcolor{printorange}\ballnumber{B} & 0.156 & \textcolor{printorange}{\textbf{0.00258}} & \textcolor{printorange}{\textbf{5253}} & 0.121 & \textcolor{printorange}{\textbf{0.841}} & \textcolor{printorange}{\textbf{1451}} & 1068 & 2.1e+05 & 0.0382 & \textcolor{printorange}{\textbf{0.07}} & \textcolor{printorange}{\textbf{0.168}} \\
    13G & \cellcolor{printblue}\ballnumber{C} & 0.183 & 0.0153 & 1.15e+04 & 0.163 & 0.661 & 5482 & 4881 & 9.76e+05 & 0.0523 & 0.0207 & 0.197 \\
    14G & \cellcolor{printgreen}\ballnumber{D} & 0.266 & 0.0327 & 2.04e+04 & 0.217 & 0.575 & 5323 & 4636 & 9.26e+05 & 0.352 & 0.232 & 0.291 \\
    15G & \cellcolor{printblue}\ballnumber{C} & 0.215 & 0.0211 & 1.35e+04 & 0.159 & 0.672 & 5053 & 4495 & 8.99e+05 & 0.12 & 0.0353 & 0.218 \\
    16G & \cellcolor{printgreen}\ballnumber{D} & 0.294 & 0.0216 & 2.07e+04 & 0.241 & 0.537 & 4546 & 3907 & 7.8e+05 & 0.335 & 0.287 & 0.301 \\
    \midrule

    \rowcolor{gray!10}\textit{\textcolor{RoyalBlue}{13aG}} & \cellcolor{printblue}\ballnumber{C} & 0.17 & 0.0124 & 1.21e+04 & 0.158 & 0.728 & 3128 & 2611 & 5.22e+05 & 0.0844 & 0.0321 & 0.173 \\
    \rowcolor{gray!10}\textit{\textcolor{RoyalBlue}{14aG}} & \cellcolor{printgreen}\ballnumber{D} & 0.225 & 0.0207 & 1.79e+04 & 0.231 & 0.574 & 3934 & 3287 & 6.55e+05 & 0.346 & 0.245 & 0.267 \\
    \rowcolor{gray!10}\textit{\textcolor{RoyalBlue}{15aG}} & \cellcolor{printblue}\ballnumber{C} & 0.176 & 0.0128 & 1.23e+04 & 0.152 & 0.708 & 2848 & 2321 & 4.64e+05 & 0.129 & 0.0377 & 0.182 \\
    \rowcolor{gray!10}\textit{\textcolor{RoyalBlue}{16aG}} & \cellcolor{printgreen}\ballnumber{D} & 0.222 & 0.0139 & 1.59e+04 & 0.193 & 0.668 & 3379 & 2691 & 5.36e+05 & 0.334 & 0.286 & 0.249 \\

    \multicolumn{13}{c}{ \textbf{ UNetGAN } } \\ \midrule
    17G & \cellcolor{printblue}\ballnumber{C} & 0.264 & 0.0079 & \textcolor{printblue}{\textbf{5169}} & 0.193 & 0.66 & 2075 & 1311 & 2.55e+05 & 0.0801 & 0.1 & 0.277 \\
    18G & \cellcolor{printgreen}\ballnumber{D} & 0.22 & 0.00932 & 8201 & \textcolor{printgreen}{\textbf{0.168}} & \textcolor{printgreen}{\textbf{0.702}} & \textcolor{printgreen}{\textbf{1937}} & \textcolor{printgreen}{\textbf{1297}} & \textcolor{printgreen}{\textbf{2.56e+05}} & 0.194 & 0.229 & \textcolor{printgreen}{\textbf{0.193}} \\
    19G & \cellcolor{printred}\ballnumber{A} & 0.251 & 0.00498 & \textcolor{printred}{\textbf{3282}} & 0.168 & 0.744 & 1408 & 1108 & 2.07e+05 & 0.0588 & 0.0715 & 0.234 \\
    20G & \cellcolor{printorange}\ballnumber{B} & \textcolor{printorange}{\textbf{0.148}} & 0.00293 & 7158 & \textcolor{printorange}{\textbf{0.118}} & 0.839 & 1459 & \textcolor{printorange}{\textbf{1056}} & \textcolor{printorange}{\textbf{2.03e+05}} & \textcolor{printorange}{\textbf{0.0239}} & 0.126 & 0.169 \\
    21G & \cellcolor{printblue}\ballnumber{C} & 0.172 & 0.00776 & 8399 & 0.152 & 0.752 & 4154 & 3492 & 6.97e+05 & 0.0842 & 0.0942 & 0.215 \\
    22G & \cellcolor{printgreen}\ballnumber{D} & \textcolor{printgreen}{\textbf{0.193}} & 0.00669 & 1.08e+04 & 0.176 & 0.688 & 4149 & 3546 & 7.08e+05 & 0.171 & \textcolor{printgreen}{\textbf{0.194}} & 0.207 \\
    23G & \cellcolor{printblue}\ballnumber{C} & 0.175 & 0.00749 & 8038 & 0.162 & 0.774 & 4129 & 3352 & 6.69e+05 & 0.118 & 0.0848 & 0.222 \\
    24G & \cellcolor{printgreen}\ballnumber{D} & 0.203 & \textcolor{printgreen}{\textbf{0.00547}} & 8987 & 0.176 & 0.695 & 4063 & 3480 & 6.95e+05 & 0.173 & 0.212 & 0.203 \\
    \midrule

    \rowcolor{gray!10}\textit{\textcolor{RoyalBlue}{21aG}} & \cellcolor{printblue}\ballnumber{C} & 0.181 & \textbf{0.00682} & 9029 & 0.163 & 0.74 & 3212 & 2628 & 5.24e+05 & 0.135 & 0.113 & 0.214 \\
    \rowcolor{gray!10}\textit{\textcolor{RoyalBlue}{22aG}} & \cellcolor{printgreen}\ballnumber{D} & \textbf{0.176} & \textbf{0.00533} & 1.05e+04 & \textbf{0.159} & \textbf{0.746} & 2932 & 2341 & 4.67e+05 & 0.178 & 0.226 & 0.2 \\
    \rowcolor{gray!10}\textit{\textcolor{RoyalBlue}{23aG}} & \cellcolor{printblue}\ballnumber{C} & 0.165 & \textbf{0.00366} & \textbf{5113} & 0.14 & 0.818 & 3047 & 2398 & 4.76e+05 & 0.0798 & 0.0603 & 0.193 \\
    \rowcolor{gray!10}\textit{\textcolor{RoyalBlue}{24aG}} & \cellcolor{printgreen}\ballnumber{D} & 0.195 & \textbf{0.00509} & 9157 & 0.168 & \textbf{0.742} & 2860 & 2279 & 4.54e+05 & 0.193 & 0.226 & 0.201 \\

    \multicolumn{13}{c}{ \textbf{ Ablation Study I: Varying $\varepsilon_s$ } } \\ \midrule
    25G & \cellcolor{printgreen}\ballnumber{D} & 0.432 & 0.0543 & 5823 & 0.46 & 0.08 & 7875 & 6774 & 1.35e+06 & 0.121 & 0.278 & 0.482 \\
    2G & \cellcolor{printgreen}\ballnumber{D} & 0.404 & 0.0392 & 8167 & 0.435 & 0.262 & 5797 & 5159 & 1.03e+06 & 0.103 & 0.205 & 0.408 \\
    26G & \cellcolor{printgreen}\ballnumber{D} & 0.289 & 0.0202 & 7129 & 0.244 & 0.571 & 3135 & 2577 & 5.14e+05 & 0.0543 & 0.187 & 0.278 \\
    27G & \cellcolor{printgreen}\ballnumber{D} & 0.263 & 0.0238 & 6766 & 0.196 & 0.635 & 3244 & 2757 & 5.5e+05 & 0.135 & 0.0831 & 0.239 \\

    \multicolumn{13}{c}{ \textbf{ Non-Deep Learning Baseline: PrivTrace } } \\ \midrule
    40G & \cellcolor{printgreen}\ballnumber{D} & 0.3 & 0.015 & 1.11e+04 & 0.167 & 0.793 & 2211 & 1671 & 3.32e+05 & 0.155 & 0.323 & 0.169 \\
    \bottomrule
\end{tabular}

  \end{adjustbox}
  \label{tab_all_results_gl}
\end{table*}

\ifgeolife
  \begin{figure*}
      \centering
      \subfloat[
          Conditional models provide no formal guarantees (\ballnumber{A}).
          Unconditional models provide formal guarantees \wrt the test data (\ballnumber{C}).
      \label{fig_examples_nodp_geolife}
      ]{
          \includegraphics[width=0.98\textwidth]{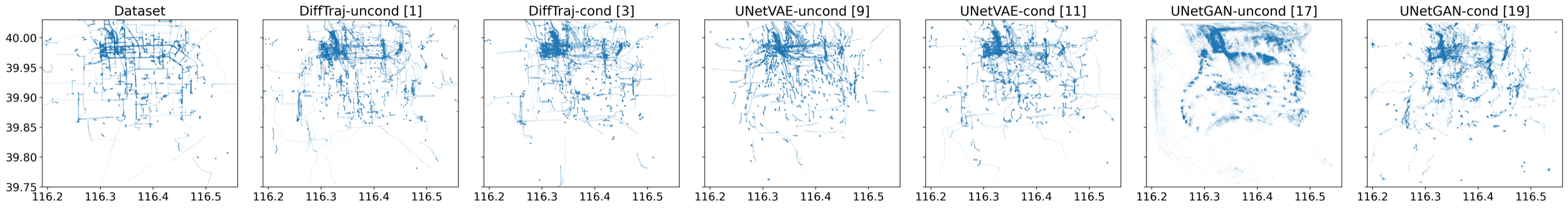}
      }

      \subfloat[
          All these models provide privacy guarantees \wrt the conditional information (\ballnumber{C}).
      \label{fig_examples_dpcond_ec10_geolife}
      ]{
          \includegraphics[width=0.98\textwidth]{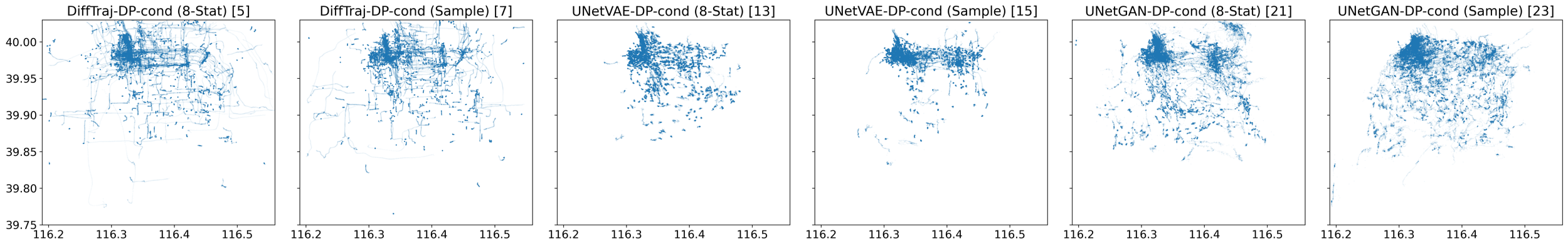}
      }

      \subfloat[
          All these models provide full formal guarantees (\ballnumber{D}).
      \label{fig_examples_dp_ec10_geolife}
      ]{
          \includegraphics[width=0.98\textwidth]{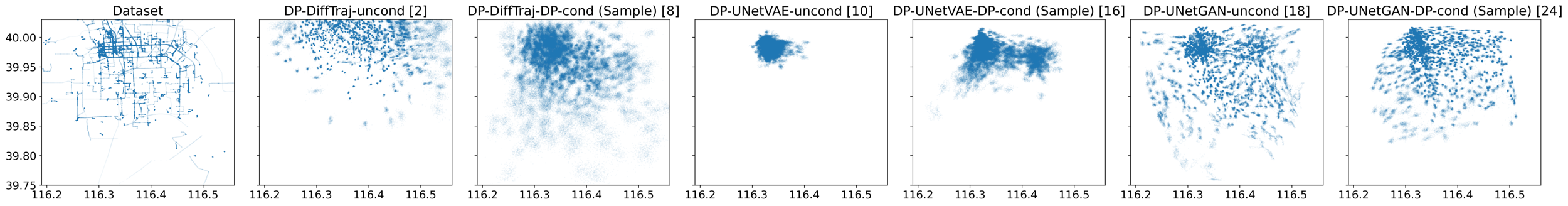}
      }

      \caption{
          Example synthetic trajectories for GeoLife after $\approx\num{100000}$ steps.
          The case IDs (\refer \tabref{tab_all_cases}) are reported in brackets.
      }
      \label{fig_combined_examples_geolife}
      \vspace{-1.5em}
  \end{figure*}
\fi

\end{document}